\def\lcu{\left\{}
\def\rcu{\right\}}
\def\({\left(}
\def\){\right)}
\def\[{\left[}
\def\]{\right]}
\def\<{\left<}
\def\>{\right>}
\def\lmid{\;\middle\vert\;}
\DeclarePairedDelimiter{\floor}{\lfloor}{\rfloor}
\def\E{\mathbb{E}}
\def\P{\mathbb{P}}
\newcommand{\cP}{{\mathcal{P}}}
\newcommand{\eP}{{\EuScript{P}}}
\newcommand{\up}{\emph{up}}
\newcommand{\down}{\emph{down}}
\newcommand{\UP}{\emph{UP}}
\newcommand{\DOWN}{\emph{DOWN}}
\newcommand{\reliability}{\ell}
\DeclareFontFamily{OT1}{pzc}{}
\DeclareFontShape{OT1}{pzc}{m}{it}{<-> s * [1.10] pzcmi7t}{}
\DeclareMathAlphabet{\mathpzc}{OT1}{pzc}{m}{it}
\newcommand{\pzcG}{{\mathpzc{G}}}
\newcommand{\fixedGraph}{\pzcG}
\newcommand{\pzcE}{{\mathpzc{E}}}
\newcommand{\fixedEdgeSet}{\pzcE}
\newcommand{\pzcV}{{\mathpzc{V}}}
\newcommand{\fixedVertexSet}{\pzcV}
\newcommand{\pzcv}{{\mathpzc{v}}}
\newcommand{\fixedVertex}{\pzcv}
\newcommand{\customVertex}[1]{\mathpzc{#1}}
\newcommand{\valueVertexSetUp}{\mathbf x}
\newcommand{\valueVertexSet}{\mathbf x}
\newcommand{\valueVertex}{x}
\newcommand{\lowerbound}[1][]{\underbar{\ensuremath{\mathbf x}}^{#1}}
\newcommand{\scriptStylelowerbound}[1][]{\underbar{\ensuremath{\scriptstyle \mathbf x}}^{#1}}
\newcommand{\lowerBound}[1][]{\underbar{\ensuremath{\mathbf X}}^{#1}}
\newcommand{\upperBound}{\overline{\mathbf X}}
\newcommand{\upperbound}{\overline{\mathbf x}}
\newcommand{\upperBoundExt}{\ensuremath{\mathbf P}}
\newcommand{\lowerBoundExt}{\ensuremath{\mathbf D}}
\newcommand{\lowerboundExt}{\ensuremath{\mathbf d}}
\newcommand{\upperboundExt}{\ensuremath{\mathbf p}}
\newcommand{\possibleLower}{\underline{\mathscr{X}}}
\newcommand{\generateSubset}[3]{{\tt GenerateSubset}\(#1, #2, #3\)}
\newcommand{\generateSubsetSymbol}{{\tt GenerateSubset}}
\newcommand{\Vup}{\mathbf X}
\newcommand{\SIS}{SIS }
\newcommand{\SIR}{SIR }
\newcommand{\abs}[1]{\left| #1 \right|}
\newtheorem{sectionprop}[subsection]{Proposition}
\newtheorem{sectioncorollary}[subsection]{Corollary}
\newglossaryentry{graph}{type=symbols,sort=aa_graph,name={\ensuremath{\fixedGraph = \(\fixedVertexSet, \fixedEdgeSet\)}},description={Graph with vertex set $\fixedVertexSet$ and edge set $\fixedEdgeSet$}}
\newglossaryentry{graphsubgraphvertex}{type=symbols,sort=ab_graphsubgraphvertex,name={\ensuremath{\fixedGraph\<\fixedVertexSet'\>}},description={Subgraph of $\fixedGraph$ induced by the vertex set $\fixedVertexSet'$}}
\newglossaryentry{powerset}{type=symbols,sort=ac_powerset,name={\ensuremath{\eP\(\fixedVertexSet\)}},description={Set of subgraphs of $\fixedGraph$ induced by a vertex subset, or the power set of $\fixedVertexSet$}}
\newglossaryentry{x_up}{type=symbols,sort=ad_x_up,name={\ensuremath{\Vup}},description={The set of \up{} vertices of a network}}
\newglossaryentry{x_up_superset}{type=symbols,sort=ae_x_up_sub_super_ext,name={\ensuremath{\lowerBoundExt_r}},description={Vertices of $\fixedGraph$ which are definitely \up{}}}
\newglossaryentry{x_up_subset}{type=symbols,sort=af_x_up_sub_super_ext,name={\ensuremath{\upperBoundExt_r}},description={Vertices of $\fixedGraph$ which are possibly \up{}}}
\newglossaryentry{x_up_sub_super}{type=symbols,sort=ag_x_up_sub_super,name={\ensuremath{\lowerBound_r, \upperBound_r}},description={Random subsets and supersets of $\Vup$}}
\newglossaryentry{d_values}{type=symbols,sort=ah_x_up_values,name={\ensuremath{\mathscr{D}_r}},description={Possible values of $\lowerBoundExt_r$}}
\newglossaryentry{x_up_values}{type=symbols,sort=ai_x_up_values,name={\ensuremath{\possibleLower}},description={Possible values of $\lowerBound_r$}}
\newglossaryentry{f}{type=symbols,sort=aj_f,name={\ensuremath{F}},description={Event that $\Vup$ is connected}}
\newglossaryentry{fr}{type=symbols,sort=ak_f_r,name={\ensuremath{F_r}},description={Event that $\Vup$ can still be connected conditional on $\lowerBoundExt_r$}}
\newglossaryentry{i_r}{type=symbols,sort=al_i_r,name={\ensuremath{I_r\(\lowerboundExt_r\)}},description={Indicator function of $\P\(F \lmid \lowerBoundExt_r = \lowerboundExt_r\) > 0$}}
\newglossaryentry{abs}{type=symbols,sort=am_abs,name={\ensuremath{\abs{Z}}},description={The number of elements of set $Z$}}	
\newglossaryentry{closedball}{type=symbols,sort=an_ballclosed,name={\ensuremath{\overline{B\(x, r\)}}},description={Closed ball of radius $r$ around $x$}}
\newglossaryentry{openball}{type=symbols,sort=ao_ballopen,name={\ensuremath{B\(x, r\)}},description={Open ball of radius $r$ around $x$}}
\newlength{\savedtextfloatsep}
\newlength{\savedintextsep}
\begin{document}

\title{Estimating Residual Connectivity for \\ Random Graphs
}


\author{Rohan Shah, Dirk P. Kroese
        \\\small School of Mathematics and Physics, The University of Queensland, Australia
}


%

\maketitle

\begin{abstract}
Computation of the probability that a random graph is connected is a challenging problem, so it is natural to turn to approximations such as Monte Carlo methods. We describe sequential importance resampling and splitting algorithms for the estimation of these probabilities. The importance sampling steps of these algorithms involve identifying vertices that must be present in order for the random graph to be connected, and conditioning on the corresponding events. We provide numerical results demonstrating the effectiveness of the proposed algorithm. 
\end{abstract}

\glsaddall
\printglossary[title=Notation,style=long]

\section{Introduction}
\label{sec:intro}
In its broadest sense, \emph{network reliability} is the study of the performance characteristics of systems that can be modeled by random graphs. The most common application is the study of communication networks \citep{Cancela2009}, but other applications include electricity networks \citep{Chassin2005,Pagani2013} and air transport networks \citep{Zanin2013,Wilkinson2012,Cardillo2013}. Often such systems are highly reliable, and the problem of estimating failure probabilities for these systems is one of estimating a rare-event probability. The most widely studied network reliability model is the \emph{K-terminal network reliability model}, where the system is operational if a specified set of vertices is connected; see \cite{Gertsbakh2011a} for further details.

We consider the problem of estimating the \emph{residual connectedness reliability} \citep{Lin2015}, defined as follows. Let $\fixedGraph = \(\fixedVertexSet, \fixedEdgeSet\)$ be an undirected graph. Vertices of $\fixedGraph$ fail independently with probability $1 - p$, and edges are considered failed if either of the vertices fails. The assumption that all vertices fail with equal probability is a notational convenience; generalization to the case where vertices fail with different probabilities is straightforward. Failed vertices are said to be \down{} and functioning vertices are said to be \up{}. The overall network is said to be \UP{} if the \up{} subgraph induced by the \up{} vertices is connected, and \DOWN{} otherwise. The probability that the system is in the \UP{} state is called the residual connectedness reliability. Important applications include \emph{Radio Broadcast Networks} and \emph{Mobile Ad hoc networks} \citep{Royer1999,Tanenbaum2002}.

In practice exact computation of the residual connectedness reliability (RCR) is found to be difficult. The formal statement is that computation of the RCR is \emph{NP-hard} \citep{Sutner1991}. That is, computing the RCR is at least as difficult as the hardest problem in the computational complexity class NP \citep{Sipser1996}. For NP-hard problems we generally turn to approximations such as Monte Carlo methods. 

Most established Monte Carlo methods for the related \emph{K-terminal network reliability problem}, such as approximate zero-variance importance sampling \citep{Lecuyer2011}, the merge process \citep{Elperin1991} and generalized splitting \citep{Botev2013} cannot easily be adapted to the RCR problem. These methods generally rely on the network structure function being monotonic, which is not the case for RCR. One method that can be adapted is the Recursive Variance Reduction (RVR) method \citep{Cancela2002}. Another, specifically designed for the RCR model, is the conditional Monte Carlo method in \cite{Shah2014}. We propose a novel sequential importance resampling algorithm for the estimation of the RCR, which can be interpreted as a ``splitting'' algorithm combined with importance sampling. 

The \emph{splitting method}, first described in \cite{Kahn1951}, is a Monte Carlo technique for the estimation of probabilities of rare events. If the rare event can be written as the intersection of nested events, then the rare-event probability can be written as a product of conditional probabilities. These probabilities will typically be much larger than the rare-event probability and can therefore be estimated more accurately. The key feature of the splitting method is that sample paths of some random process are replicated or \emph{split} when they hit some ``intermediate level of rareness''. Although the estimators of the conditional probabilities are generally not independent, under the splitting method their product is still an unbiased estimator of the rare-event probability. The splitting method and similar sequential importance resampling algorithms such as stochastic enumeration \citep{Villien1991a,Au2001,Liu2001,DelMoral2006,Rubinstein2010,Vaisman2014} have found numerous applications including network reliability \citep{Cancela2008}, queuing theory \citep{Glasserman1999,Garvels2000}, particle transmission \citep{Kahn1951} and air traffic control \citep{Jacquemart2013}. 

The rest of this paper is organized as follows. Section \ref{sec:residual_reliability} introduces the RCR model and describes two existing Monte Carlo estimation methods for the RCR problem. Section \ref{sec:splitting_rcr} introduces two splitting algorithms for the estimation of the RCR. Section \ref{sec:transfer_matrix_method} gives an overview of the transfer matrix method, which allows the exact computation of the RCR in some cases. Section \ref{sec:numerical_results} describes a simulation study performed to compare the different estimation method and discusses the results. A list of proofs is given in Appendix \ref{sec:omitted_proofs}. 

\section{The Residual Connectedness Reliability Model\label{sec:residual_reliability}}

\subsection{Definition}

Let $\fixedGraph = \(\fixedVertexSet, \fixedEdgeSet\)$ be a connected graph in which the \emph{state} of a vertex $\fixedVertex \in \fixedVertexSet$ is a binary random variable. If the binary variable takes value $1$ then $\fixedVertex$ is functioning correctly; otherwise the vertex has failed. For notational convenience we assume that these random variables are all independent and identically distributed. This means that each vertex is in the \up{} state with probability $p$ and the \down{} state with probability $1 - p$. The subset of $\fixedVertexSet$ containing the \up{} vertices is denoted by $\Vup$. We will identify the set $\Vup$ with the subgraph $\fixedGraph\<\Vup\>$ which has vertex set $\Vup$ and edge set 
\begin{align*}
\lcu \lcu \fixedVertex_1, \fixedVertex_2\rcu \in \fixedEdgeSet \colon \fixedVertex_1, \fixedVertex_2 \in \Vup\rcu.
\end{align*}
Let $\cP\(\fixedVertexSet\)$ be the set of all subgraphs of $\fixedGraph$ induced by a vertex subset. Define $\varphi$ as the binary function on $\cP\(\fixedVertexSet\)$ which takes value $1$ for connected graphs and $0$ otherwise. When we write $\varphi\(\Vup\)$, this will be $1$ if $\fixedGraph\<\Vup\>$ is connected and $0$ otherwise. We are interested in estimating the \emph{residual connectedness reliability}
\begin{align*}
\reliability\(\fixedGraph, p\) &= \P\(\mbox{$\Vup$ is connected}\) = \P\(\varphi\(\Vup\) = 1\).
\end{align*}

It will be convenient to take some enumeration $\fixedVertex_1, \dots, \fixedVertex_{\abs{\fixedVertexSet}}$ of the vertices $\fixedVertexSet$ and use this as a total ordering. This makes it is possible to write statements such as $\fixedVertex_1 < \fixedVertex_2$, and $\min \lcu \fixedVertex_1, \fixedVertex_2\rcu = \fixedVertex_1$. We also use the notation $\[\fixedVertex, \infty\) = \lcu \customVertex{w} \in \fixedVertexSet \lmid \customVertex{w} \geq \fixedVertex\rcu$. Note that the ordering used is completely arbitrary.

\subsection{Existing Methods}

The Recursive Variance Reduction (RVR) method of \cite{Cancela2002} can be adapted to the RCR problem as follows. Let $U$ be the event that all vertices of $\fixedGraph$ are \up{} and let $\ell$ be the probability that the input random graph model is connected. The RVR method begins by writing
\begin{align}
\ell &= \P\(\varphi\(\Vup\) = 1\) = \varphi\(\fixedGraph\)\P\(U\) + \P\(\varphi\(\Vup\) = 1\lmid U^c\)\P\(U^c\)\nonumber\\
&= p^{\abs{\fixedVertexSet}} + \P\(\varphi\(\Vup\) = 1\lmid U^c\)\(1 - p^{\abs{\fixedVertexSet}}\).\label{eq:rvr_eq_1}
\end{align}
The event $U^c$ is the event that at least one vertex fails. Let $V_1$ be the \emph{first} vertex that fails (with respect to the total ordering of vertices). If no vertex fails then we say that $V_1 = \infty$, but $U^c$ occurring is equivalent to the condition $V_1 < \infty$. In this case we have
\begin{align*}
\P\(V_1 = \customVertex{v}_i\lmid V_1 < \infty\) &= \frac{p^{i-1} (1-p)}{1 - p^{\abs{\fixedVertexSet}}}.
\end{align*}
Equation (\ref{eq:rvr_eq_1}) can be rewritten as 
\begin{align*}
\ell &= p^{\abs{\fixedVertexSet}} + \P\(\varphi\(\Vup\) = 1\lmid V_1<\infty\)\(1 - p^{\abs{\fixedVertexSet}}\). 
\end{align*}
The problem now is to estimate $\P\(\varphi\(\Vup\) = 1\lmid V_1<\infty\)$. But conditional on an observed value of $V_1$, we know that vertex $V_1$ is \down{}, and all vertices smaller (with respect to the total ordering of vertices) than $V_1$ are \up{}. We have therefore fixed the states of some of the vertices. So conditional on $V_1$, we have a random graph where the number of vertices with unknown state is smaller than in the original random graph. Define $\ell^{(1)}\(v_1\) = \P\(\varphi\(\Vup\) = 1\lmid V_1 = v_1\)$. Then 
\begin{align}
\ell &= p^{\abs{\fixedVertexSet}} + \E\[\ell^{(1)}\(V_1\) \lmid V_1 < \infty\]\(1 - p^{\abs{\fixedVertexSet}}\)\label{eq:rvr_eq_2}.
\end{align}
The RVR method first simulates $\left.V_1\lmid V_1 < \infty\right.$. If $\ell^{(1)}\(V_1\)$ can be explicitly computed, then an estimator for $\ell$ is 
\begin{align*}
p^{\abs{\fixedVertexSet}} + \ell^{(1)}\(V_1\)\(1 - p^{\abs{\fixedVertexSet}}\). 
\end{align*}
If $\ell^{(1)}\(V_1\)$ cannot be explicitly computed then the second \down{} vertex $V_2$ is simulated and the expansion used in (\ref{eq:rvr_eq_1}) is applied recursively to $\ell^{(1)}\(V_1\)$. See \cite{Cancela2002} for further details. 

Another approach is the simple conditional Monte Carlo method suggested in \cite{Shah2014}. In this method a random order is selected for the vertices. The states of the vertices are then simulated in that order, until the first \up{} vertex is identified, denoted by $\omega$. The connected component for the identified \up{} vertex is then simulated. At this point the states of some vertices are still unknown. The probability that the random graph is connected is the probability that the connected component of $\omega$ is the \emph{only} connected component. This is the probability that all vertices with unknown state are in fact \down{}, which is a power of $1-p$. See \cite{Shah2014} for further details. We will refer to this method as the conditional Monte Carlo method. 

\section{Splitting for Residual Connectedness Reliability\label{sec:splitting_rcr}}

We construct a fixed-length Markov chain, denoted by $\lcu \lowerBoundExt_r \rcu_{r=0}^R$. Increasing values of $r$ correspond to increasing information about the random graph $\Vup$. At the last time stage we have complete information about $\Vup$, so in fact $\Vup$ is equal in distribution to $\lowerBoundExt_R$. In some cases we will be able to determine that $\varphi\(\Vup\) = 0$ before ``time'' $R$. We will use this Markov chain to construct a classical splitting algorithm. Note the use of $r$ for the time variable; this is intended to emphasize that our ``time'' variable corresponds to the radius of a certain disk. As in Section \ref{sec:residual_reliability} we will assume some arbitrary ordering of the vertex set $\fixedVertexSet$.

\subsection{Notation\label{subsec:splitting_rcr_notation}}

We illustrate the ideas in this section using the $5 \times 5$ grid graph, with $R = 3$ and the value of $\Vup$ shown in Figure \ref{fig:graphical_representations_G}. Recall that the value of $\Vup$ will be the last value of the Markov Chain $\lcu \lowerBoundExt_r \rcu_{r=0}^R$. We assume the vertex ordering shown in Figure \ref{fig:vertex_ordering_picture}. 

We first construct a metric on the set $\fixedVertexSet$ of vertices. For vertices $\fixedVertex_i$ and $\fixedVertex_j$, the \emph{distance} $d\(\fixedVertex_i, \fixedVertex_j\)$ is defined as one less than the number of vertices in the shortest path in $\fixedGraph$ from $\fixedVertex_i$ to $\fixedVertex_j$. For example, the shortest path from $\fixedVertex_i$ to $\fixedVertex_i$ is $\lcu \fixedVertex_i\rcu$, so $d\(\fixedVertex_i, \fixedVertex_i\) = 0$. If $\fixedVertex_i$ and $\fixedVertex_j$ are connected by an edge then the shortest path is $\lcu \fixedVertex_i, \fixedVertex_j\rcu$, so that $d\(\fixedVertex_i, \fixedVertex_j\) = 1$. As we assumed that $\fixedGraph$ was connected, such a path always exists. 
\begin{figure}
  \centering
  \subfigure[Representation of $\Vup$ as a subgraph of $\fixedGraph$. Solid dots represent \up{} vertices. \label{fig:graphical_representations_G}]{
    \includegraphics[scale=0.55]{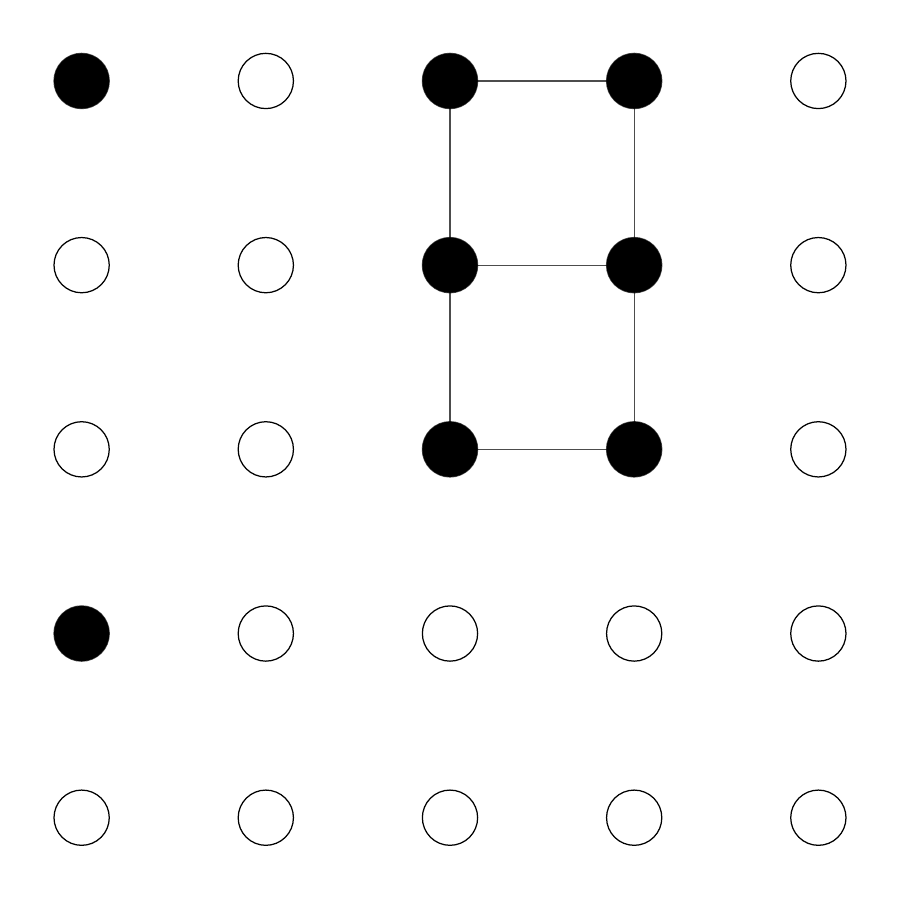}
    }
  \hspace{10pt}
  \subfigure[Ordering of vertices of $\fixedGraph$. \label{fig:vertex_ordering_picture}]{
    \centering
    \includegraphics[scale=0.55]{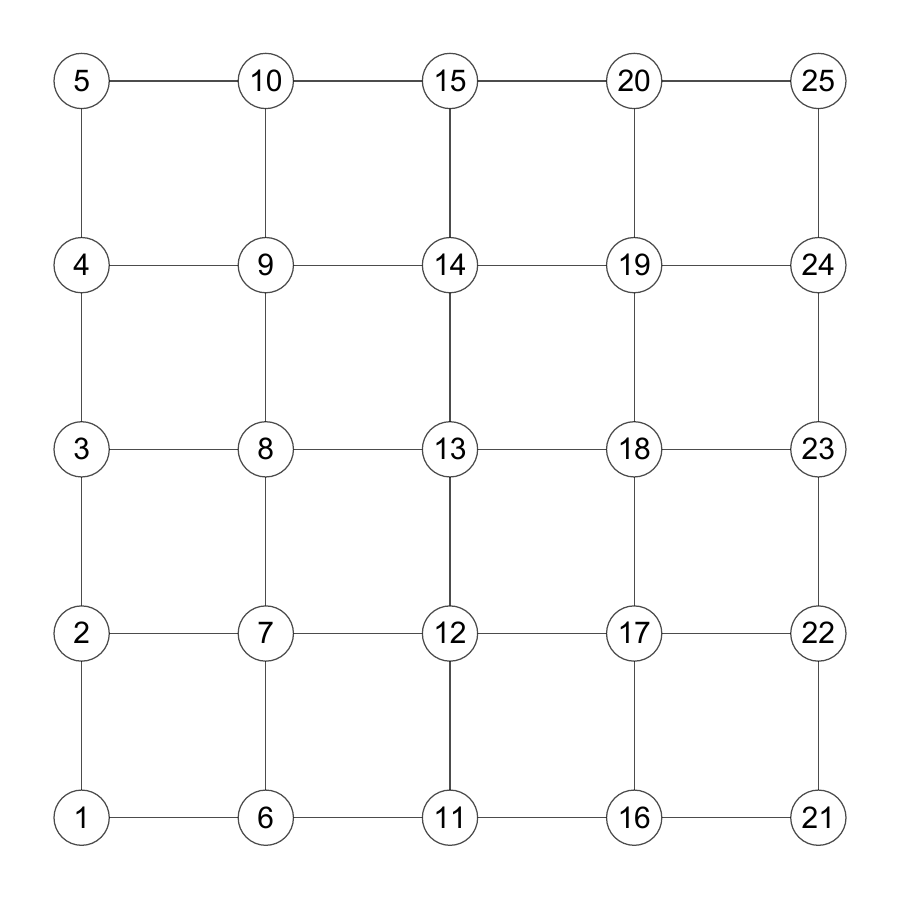}
  }
  \caption{Example value of $\Vup$ (left). Ordering of vertices for the $5 \times 5$ grid graph $\fixedGraph$ (right).}
\end{figure}

Let $R$ be some positive integer. We now define random sets $\lowerBound[n]_r$, for integers $0 \leq r \leq R$ and $n \geq 1$. Recall that $\Vup$ is the set of \up{} vertices, and define $\lowerBound[1]_r = \min \Vup$. That is, $\lowerBound[1]_r$ is the set containing the smallest \up{} vertex with respect to the partial ordering (which does not depend on $r$). For the value of $\Vup$ given in Figure \ref{fig:graphical_representations_G}, $\lowerBound[1]_r = \lcu 2 \rcu$. For a subset $\valueVertexSet$ of vertices, define
\begin{align*}
\mathrm{Next}\(\valueVertexSet, r\) &= \min\lcu \valueVertex \in \Vup \lmid d\(\valueVertex, \valueVertexSet\) > R - r, \valueVertex > \max \valueVertexSet\rcu.
\end{align*}
In words, $\mathrm{Next}\(\valueVertexSet, r\)$ is the next \up{} vertex, after those in $\valueVertexSet$, which is more than distance $R - r$ away from the vertices in $\valueVertexSet$. Such a vertex may not exist, in which case $\mathrm{Next}\(\valueVertexSet, r\) = \emptyset$. We now make the inductive definition
\begin{align*}
\lowerBound[n+1]_r &= \lowerBound[n]_r \cup \mathrm{Next}\(\lowerBound[n]_r, r\). 
\end{align*}
This means that $\lowerBound[n]_r$ is an increasing sequence of sets of \up{} vertices, each set containing at most $n$ vertices. 

Example values of $\lowerBound[1]_0, \lowerBound[2]_0$ and $\lowerBound[3]_0$ are shown in Figures \ref{fig:graphical_representations_partial_lower_bound_1_radius_3} -- \ref{fig:graphical_representations_partial_lower_bound_3_radius_3}. As $\fixedGraph$ is a finite graph, at some point this sequence must reach some largest set, and stop increasing. We therefore define $\lowerBound_r = \bigcup_{n=1}^\infty \lowerBound[n]_r$. Note that if $r = R$ then in fact $\lowerBound_R = \Vup$. In words, $\lowerBound_r$ is the set generated by taking the first \up{} vertex, and then iteratively selecting the \emph{next} (w.r.t. ordering) \up{} vertex that is at least distance $R - r$ away from those previously selected, until no such vertex can be found. Example values of $\lowerBound_0, \lowerBound_1, \lowerBound_2$ and $\lowerBound_3$ are shown in Figures \ref{fig:graphical_representations_partial_lower_bound_3_radius_3} -- \ref{fig:graphical_representations_lower_bound_radius_0}. The set of possible values of $\lowerBound_r$ will be denoted by $\possibleLower_r$. We view the generation of $\lowerBound_r$ as the result of a function $\generateSubset{\Vup}{R}{r}$ defined in Algorithm \ref{alg:subset_generation}. 

\begin{figure}
  \centering
  \subfigure[Value of ${\lowerBound[1]_0}$\label{fig:graphical_representations_partial_lower_bound_1_radius_3}]{
    \centering
    \includegraphics[scale=0.43]{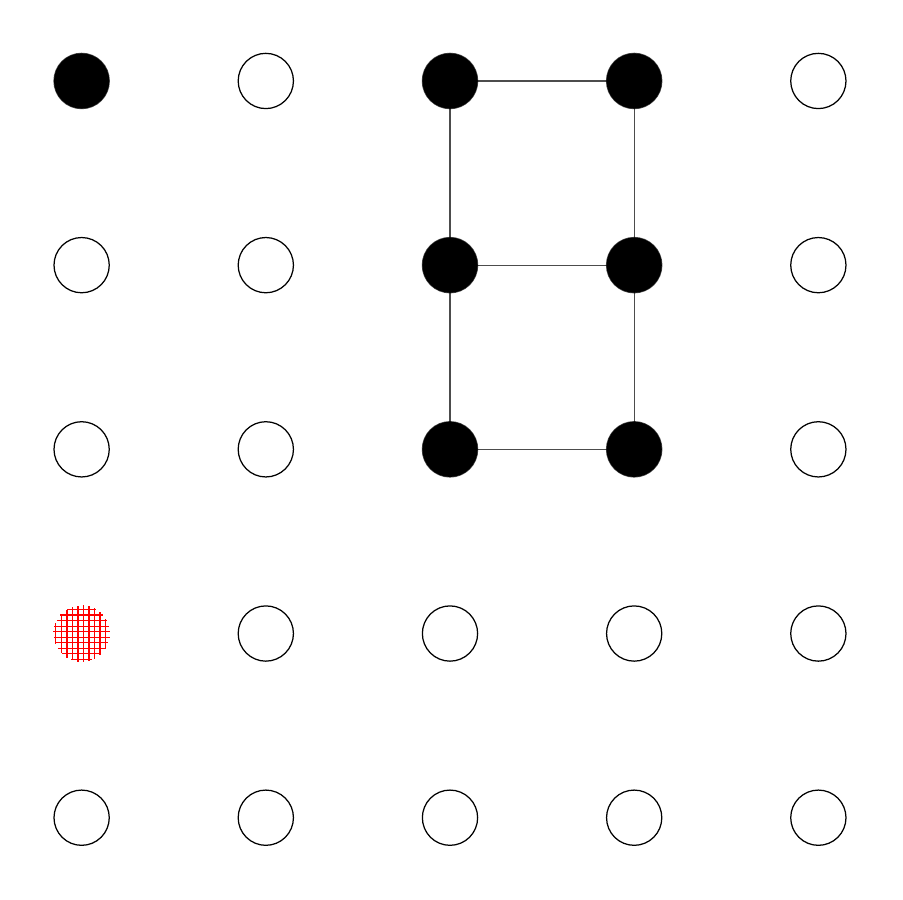}
  }
  \hspace{5pt}
  \subfigure[Value of ${\lowerBound[2]_0 = \lowerBound_0}$\label{fig:graphical_representations_partial_lower_bound_2_radius_3}]{
    \centering
    \includegraphics[scale=0.43]{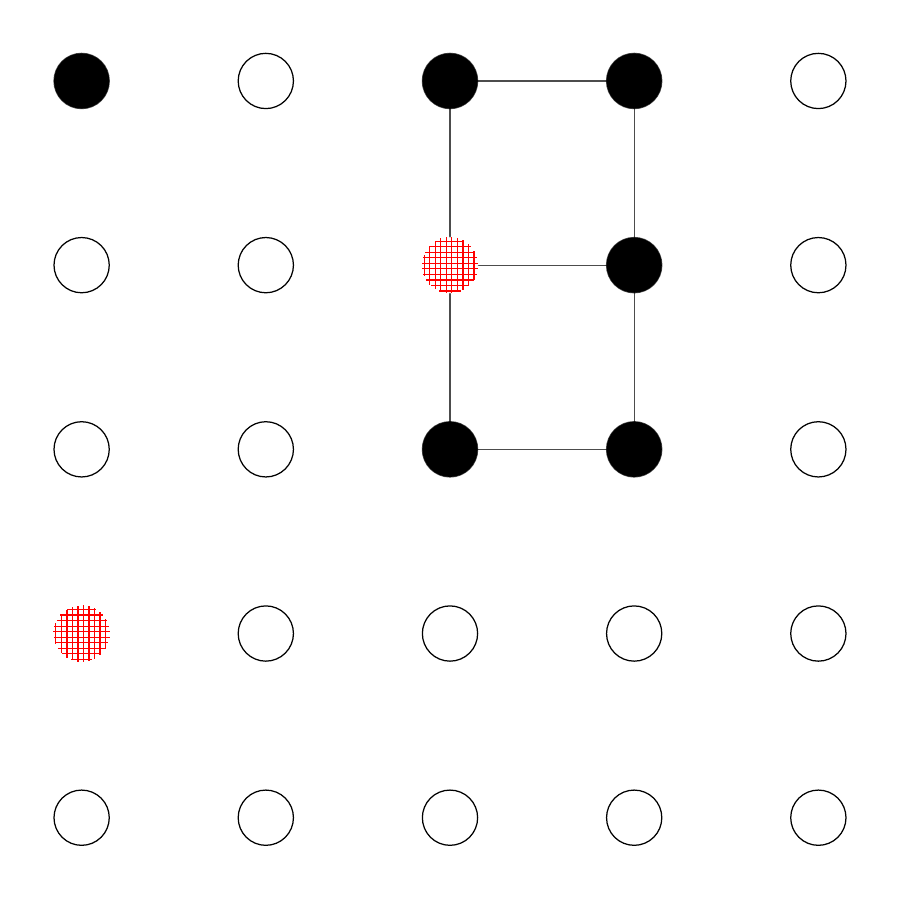}
  }
  \hspace{5pt}
  \subfigure[Value of ${\lowerBound[3]_0 = \lowerBound_0}$\label{fig:graphical_representations_partial_lower_bound_3_radius_3}]{
    \centering
    \includegraphics[scale=0.43]{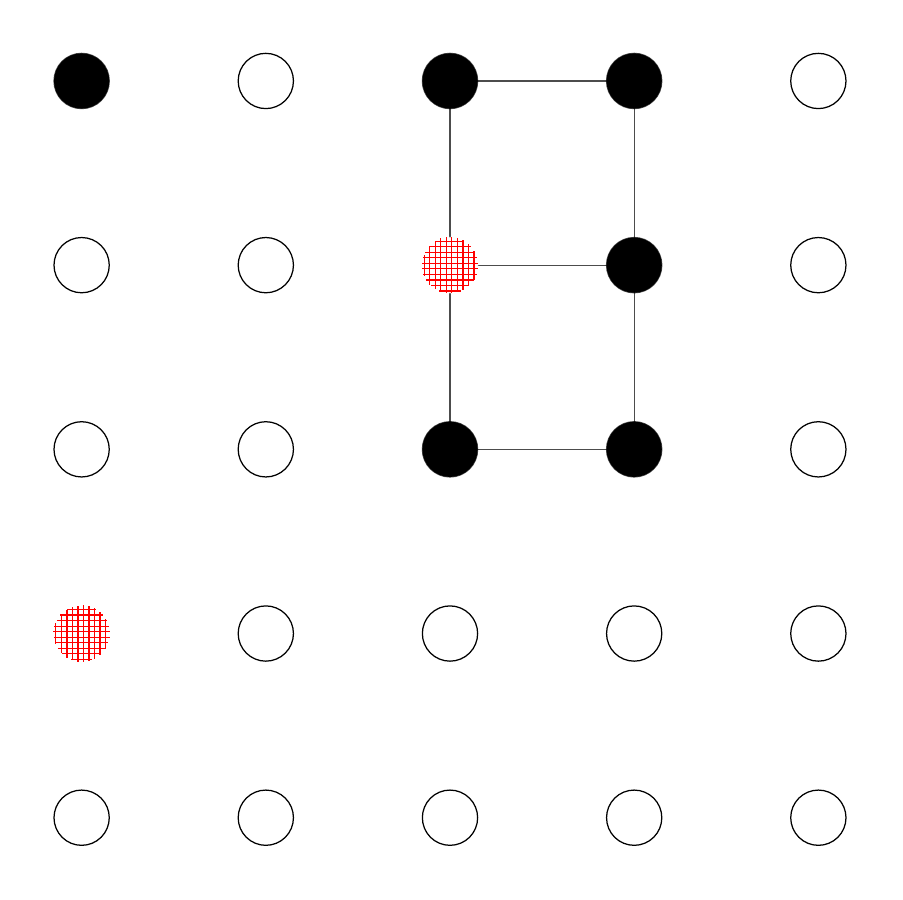}
  }
  \subfigure[Value of $\lowerBound_1$. \label{fig:graphical_representations_lower_bound_radius_2}]{
    \centering
    \includegraphics[scale=0.43]{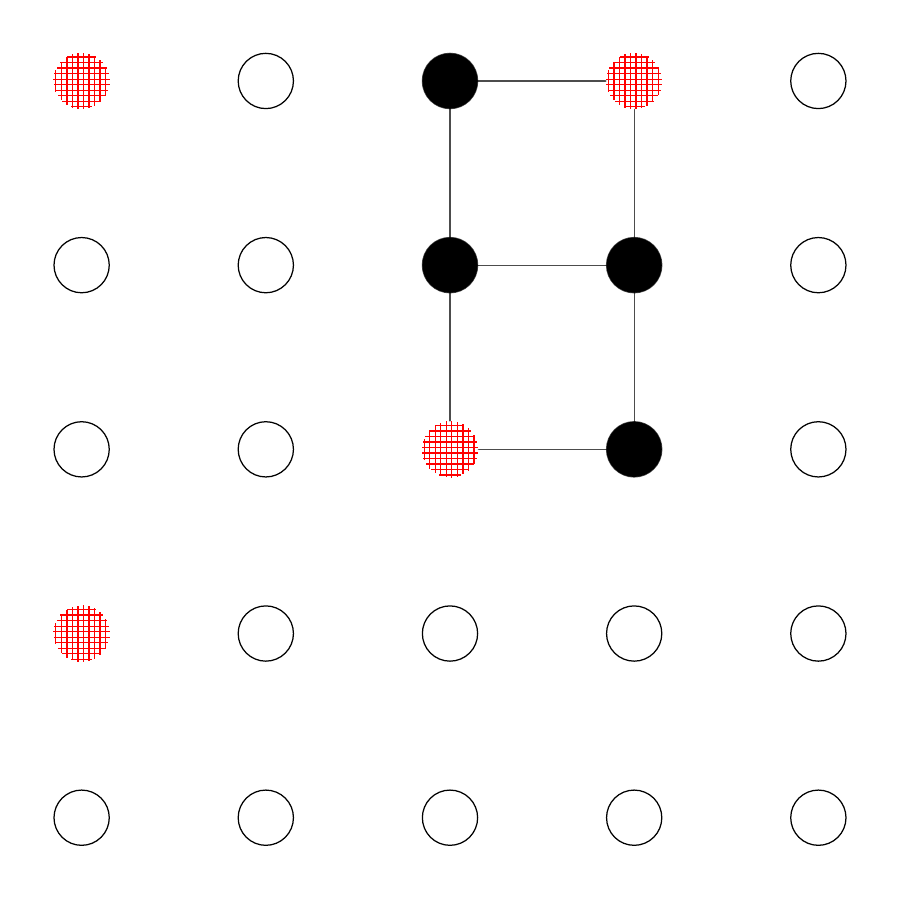}
  }
  \hspace{5pt}
  \subfigure[Value of $\lowerBound_2$. \label{fig:graphical_representations_lower_bound_radius_1}]{
    \centering
    \includegraphics[scale=0.43]{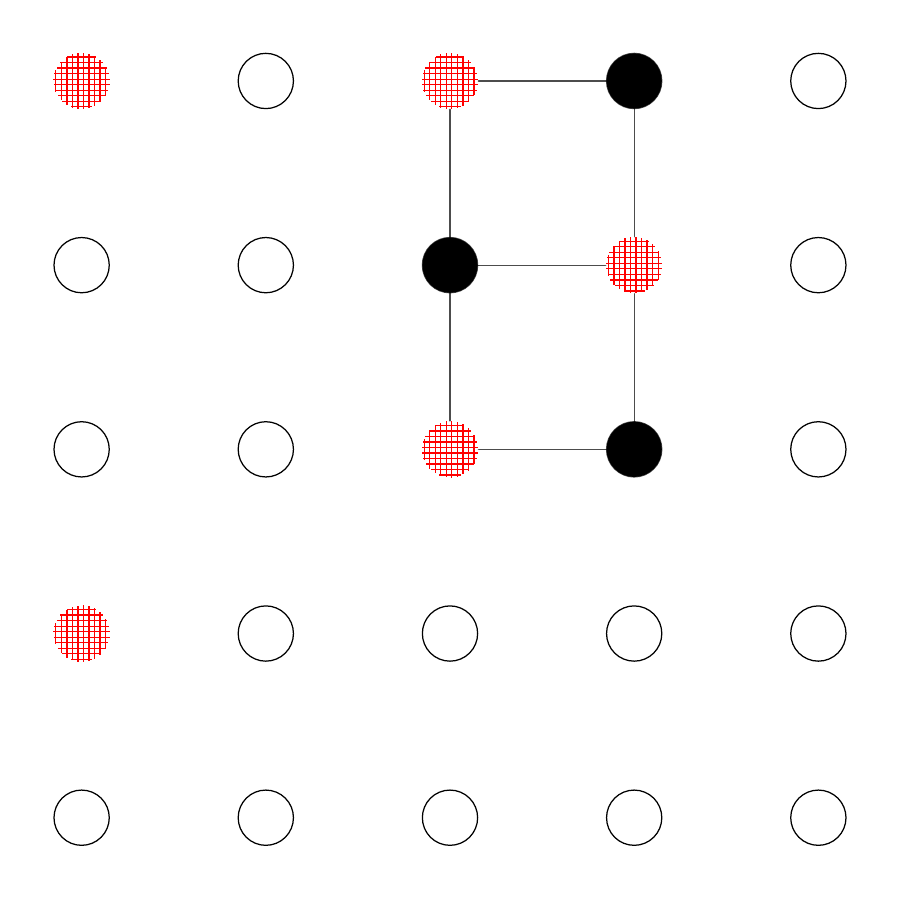}
  }
  \hspace{5pt}
  \subfigure[Value of $\lowerBound_3$. \label{fig:graphical_representations_lower_bound_radius_0}]{
    \centering
    \includegraphics[scale=0.43]{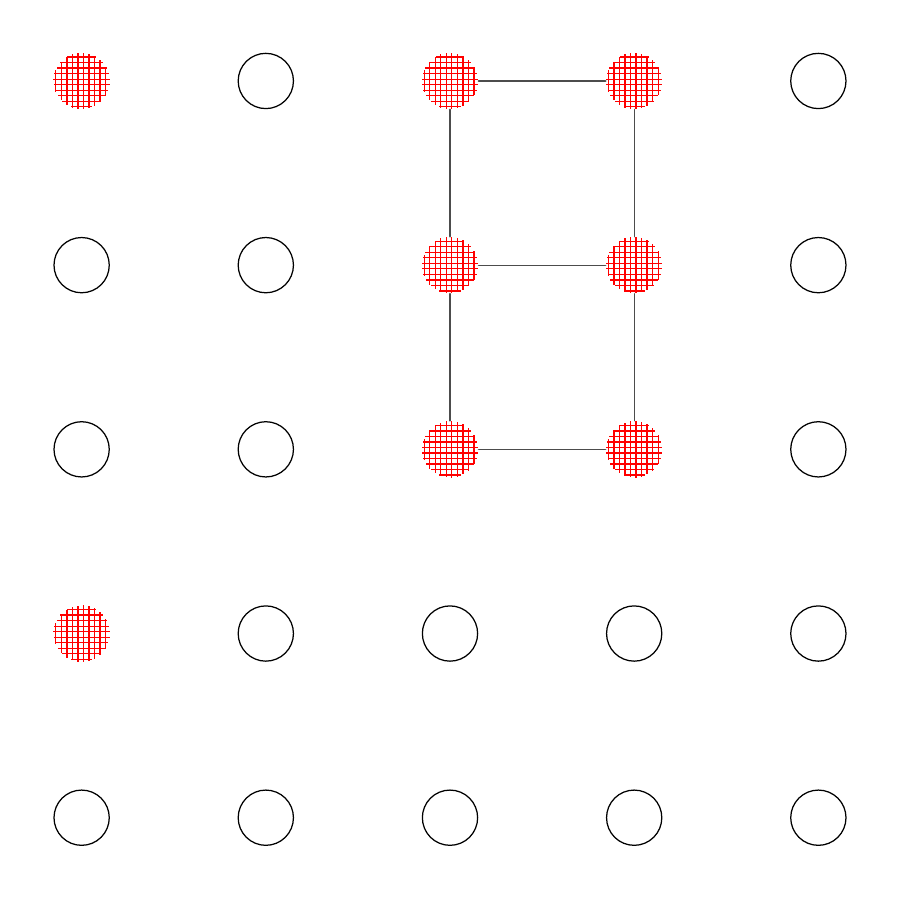}
  }
  \caption{Vertex subsets with $R = 3$ and the value of $\Vup$ shown in Figure \ref{fig:graphical_representations_G}. The red crosshatched vertices are \up{} vertices contained in the specified vertex subset. Solid vertices represent other vertices which will eventually be determined to be \up{}.\label{fig:graphical_representations_partial_lower_bounds}} 
\end{figure}

\LinesNumbered
\begin{algorithm}[!htb]
\SetAlgoSkip{}
\DontPrintSemicolon
\SetKwInOut{Input}{input}\SetKwInOut{Output}{output}
\Input{Set $\Vup \subseteq \fixedVertexSet$, maximum radius $R$, radius $r$}
\Output{$\lowerBound_r \subseteq \fixedVertexSet$}
\If{$\Vup = \emptyset$}{\KwRet{$\emptyset$}}
$\lowerBound_r\leftarrow \emptyset$\;
\While{$\exists s \in \Vup \mbox{ \emph{ such that } } s > \max \lowerBound_r \mbox{ \emph{ and } } d\(s, \lowerBound_r\) > R - r$}{
  $\lowerBound_r \leftarrow \lowerBound_r \cup \min \lcu s \in \Vup \lmid s > \max \lowerBound_r, d\(s, \lowerBound_r\) > R - r\rcu$
}
\KwRet{$\lowerBound_r$}
\caption{$\generateSubset{\Vup}{R}{r}$\label{alg:subset_generation}}
\end{algorithm}
The notation is intended to emphasize that the random sets $\lcu\lowerBound_r\rcu$ are random \emph{subsets} of $\Vup$. The $\lcu \lowerBound_r\rcu$ are all  strongly dependent on $\Vup$ and on each other. The intermediate random sets $\lcu\lowerBound[n]_r\rcu$ are only needed to construct $\lowerBound_r$, and we rarely refer to them past this point. 

For every possible value $\lowerbound_r$ of $\lowerBound_r$ let
\begin{align}
\mathrm{Up}_r\(\lowerbound_r\) &= \bigcup_{\valueVertex \in \scriptStylelowerbound_r} \(\overline{B\(\valueVertex, R - r\)} \cap \left[\valueVertex, \infty\right)\)\label{eq:defn_up_function}. 
\end{align}
The set $\mathrm{Up}_r\(\lowerbound_r\)$ represents all vertices of $\fixedGraph$ that could possibly be \up{} given that $\lowerBound_r = \lowerbound_r$. Define
\begin{align}
\upperBound_r &= \mathrm{Up}_r\(\lowerBound_r\)\label{eq:upper_bound_defn}.
\end{align}
Note that the random sets $\lcu\upperBound_r\rcu$ are random \emph{supersets} of $\Vup$ (see Proposition \ref{prop:upper_bound_proof}). The value of $\lowerBound_r$ completely determines the value of $\upperBound_r$ because of the functional relationship in (\ref{eq:upper_bound_defn}). As $\lowerBound_r$ and $\upperBound_r$ are subsets and supersets of $\Vup$ we have that
\begin{align}
\lowerBound_r \subseteq \Vup \subseteq \upperBound_r, && \mbox{ for all } 0 \leq r \leq R.\label{eq:lower_and_upper_bound}
\end{align}

Finally, we take partial intersections and unions to define the supersets and subsets
\begin{align}
\lowerBoundExt_r &= \bigcup_{t = 0}^r \lowerBound_t, && \mbox{ and } &&  \upperBoundExt_r = \bigcap_{t = 0}^r \upperBound_t.\label{eq:lower_bound_extended_definition}
\end{align}
The mnemonics $D$ and $P$ are chosen because conditional on known $\lowerBound_1, \dots, \lowerBound_r$, the set $\lowerBoundExt_r$ represents the set of vertices that are \emph{definitely} \up{}, and $\upperBoundExt_r$ represents the set of vertices that are \emph{possibly} \up{}. The set of possible values of $\lowerBoundExt_r$ will be denoted by $\mathscr{D}_r$. Example values of $\lowerBoundExt_1$ and $\lowerBoundExt_2$ are shown in Figure \ref{fig:graphical_representations_extended_lower_bounds_1} and \ref{fig:graphical_representations_extended_lower_bounds_2}. Note that the value of $\lowerBoundExt_0$ is always equal to $\lowerBound_0$ which is shown in Figure \ref{fig:graphical_representations_partial_lower_bound_3_radius_3}. The value of $\lowerBoundExt_R$ is always equal to $\Vup$, which is shown in Figure \ref{fig:graphical_representations_G}. 

\begin{figure}
    \centering
    \subfigure[Value of $\lowerBoundExt_1$. \label{fig:graphical_representations_extended_lower_bounds_1}]{
      \centering
      \includegraphics[scale=0.55]{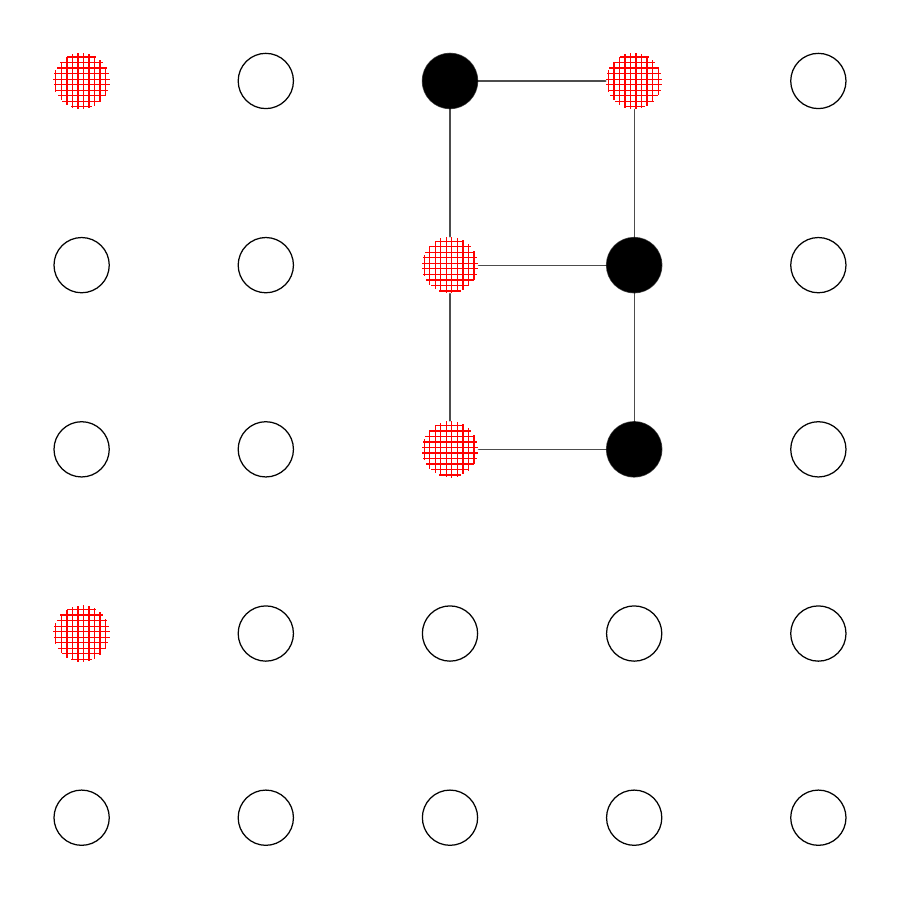}
    }
    \hspace{25pt}
    \subfigure[Value of $\lowerBoundExt_2$.\label{fig:graphical_representations_extended_lower_bounds_2}]{
      \centering
      \includegraphics[scale=0.55]{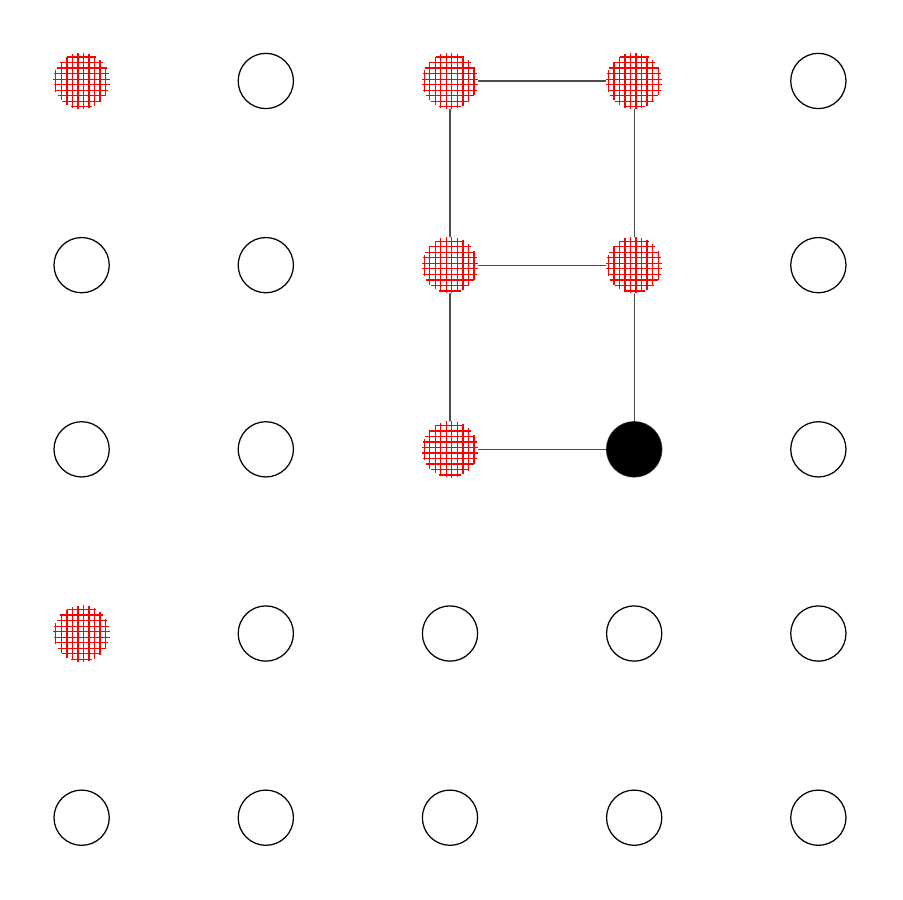}
    }
    \caption{Values of $\lowerBoundExt_1$ and $\lowerBoundExt_2$ with $R = 3$ and the value of $\Vup$ shown in Figure \ref{fig:graphical_representations_G}. The red crosshatched vertices are \up{} vertices contained in the specified vertex subset. Solid vertices represent other vertices which will eventually be determined to be \up{}. \label{fig:graphical_representations_extended_lower_bounds}}
\end{figure}

It is clear from (\ref{eq:lower_and_upper_bound}) that $\lowerBound_r = \lowerbound_r$ implies that $\lowerbound_r \subseteq \Vup \subseteq \mathrm{Up}_r\(\lowerbound_r\)$. However from Proposition \ref{prop:contained_between_determines_lower_bound} the converse is also true. So
\begin{align*}
\P\(\lowerBound_r = \lowerbound_r\) &= \P\(\lowerbound_r \subseteq \Vup \subseteq \mathrm{Up}_r\(\lowerbound_r\)\)
=\P\(\lowerbound_r \subseteq \Vup, \mathrm{Up}_r\(\lowerbound_r\)^c \subseteq \Vup^c\)\\
&= p^{\abs{\lowerbound_r}} \(1 - p\)^{\abs{\mathrm{Up}_r\(\lowerbound_r\)^c}}. 
\end{align*}
From (\ref{eq:lower_and_upper_bound}) and (\ref{eq:lower_bound_extended_definition}) we see that $\lowerBoundExt_r$ and $\upperBoundExt_r$ are subsets and supersets, so that
\begin{align}
\lowerBoundExt_r \subseteq \Vup \subseteq \upperBoundExt_r, && 0 \leq r \leq R.\label{eq:lower_and_upper_bound_ext}
\end{align}
For every $0 \leq s \leq r$ we know that
\begin{align}
\lowerBound_s \subseteq \lowerBoundExt_r \subseteq \Vup \subseteq \upperBoundExt_r \subseteq \mathrm{Up}_s\(\lowerBound_s\). \label{eq:lower_and_upper_bound_ext_2}
\end{align}
Proposition \ref{prop:contained_between_determines_lower_bound} and (\ref{eq:lower_and_upper_bound_ext_2}) imply that knowledge of $\lowerBoundExt_r$ completely determines the value of $\lowerBound_s$ for $0 \leq s \leq r$. Specifically, $\lowerBound_s = \generateSubset{\lowerBoundExt_r}{R}{s}$. Consequently $\lowerBoundExt_r$ also completely determines the values of $\lowerBoundExt_s$ and $\upperBoundExt_s$ for $0 \leq s \leq r$. 

This makes $\lcu \lowerBoundExt_r \rcu_{r=0}^R$ a Markov chain, as its state at any time completely determines its sample path up until that time. By contrast, $\lcu \lowerBound_r \rcu_{r=0}^R$ is \emph{not} a Markov chain. For example, consider Figure \ref{fig:graphical_representations_partial_lower_bound_3_radius_3} and \ref{fig:graphical_representations_lower_bound_radius_2}. Information from $\lowerBound_0$ shows that vertex $14$ of $\fixedGraph$ is \up{}, but this is not reflected in $\lowerBound_1$. So if $X_{14}$ is the binary variable controlling the final state of vertex $14$, we have 
\begin{align*}
\P\(X_{14} = 0\lmid \lowerBound_0 = \lowerbound_0\) &= \P\(X_{14} = 0\lmid \lowerBound_0 = \lowerbound_0, \lowerBound_1 = \lowerbound_1\) = 1,\\
\P\(X_{14} = 0 \lmid \lowerBound_1 = \lowerbound_1\) &\neq 1.
\end{align*}
This means the Markov property fails to hold. 

\subsection{Splitting Algorithm\label{subsec:k_network_splitting_algorithm}}

Let $F$ be the event that $\Vup$ is connected. Let $F_r$ be the event that $\Vup$ can still be connected, conditional on the observed value of $\lowerBoundExt_r$. More precisely, 
\begin{align*}
F &= \lcu \varphi\(\Vup\) = 1 \rcu,\\
F_r &= \lcu \mbox{$\exists \valueVertexSetUp \in \cP\(\fixedVertexSet\)$ with $\lowerBoundExt_r \subseteq \valueVertexSetUp \subseteq \upperBoundExt_r$, such that $\varphi\(\valueVertexSet\) = 1$}\rcu \\
&= \lcu \P\(F \lmid \lowerBoundExt_r\) > 0\rcu. 
\end{align*}
Note that the subsets $\lowerBoundExt_r$ are increasing and the supersets $\upperBoundExt_r$ are decreasing, so that $\lcu F_r \rcu_{r=0}^R$ is a decreasing sequence of events, with $F_R = F$. 

For every $0 \leq r \leq R$ and $\lowerboundExt_{r} \in \mathscr{D}_r$, define $I_r\(\lowerboundExt_{r}\) = \mathbb I \lcu \P\(F \lmid \lowerBoundExt_{r} = \lowerboundExt_{r} \) > 0\rcu$. That is, $I_r$ is the indicator function that takes value $1$ if $\Vup$ can still be connected given the observed value of $\lowerBoundExt_r$, and $0$ otherwise. Note that we can evaluate $I_r\(\lowerBoundExt_{r}\)$ by performing a depth-first search of $\upperBoundExt_r$, started from only \emph{one} of the vertices of $\lowerBoundExt_r$. Let $\valueVertexSet_{\mathrm{DFS}}$ be the vertex set traversed by the depth first search. Then $\valueVertexSet_{\mathrm{DFS}}$ is a connected graph, and if $\valueVertexSet_{\mathrm{DFS}}$ contains $\lowerBoundExt_r$ we have $\lowerBoundExt_r \subseteq \valueVertexSet_{\mathrm{DFS}} \subseteq \upperBoundExt_r$. This makes $\valueVertexSet_{\mathrm{DFS}}$ a connected graph which has a non-zero probability of occurring, in which case 
\begin{align*}
\P\(F \lmid \lowerBoundExt_{r} = \lowerboundExt_{r}\) > \P\(\Vup = \valueVertexSet_{\mathrm{DFS}}\lmid \lowerBoundExt_{r} = \lowerboundExt_{r}\) > 0. 
\end{align*}

The nested events $\lcu F_r \rcu_{r = 0}^R$ allow us to write $\P\(F\)$ as 
\begin{align*}
\P\(F\) &= \P\(F_1 \lmid F_0\)\P\(F_2 \lmid F_1\)\cdots \P\(F_r \lmid F_{r-1}\).
\end{align*}
We can therefore estimate $\P\(F\)$ as a product of estimates of conditional probabilities using a splitting algorithm. This leads to the fixed splitting algorithm given in Algorithm \ref{alg:fixed_splitting_for_network_reliability}. For notational simplicity we take the splitting factors to be integers. Removing this restriction is conceptually easy, see Section 10.6 of \cite{Kroese2011} for further details. \setlength{\intextsep}{10pt}\setlength{\textfloatsep}{10pt}

\begin{algorithm}[!htb]
\SetAlgoSkip{}
\DontPrintSemicolon
\SetKwInOut{Input}{input}\SetKwInOut{Output}{output}
\Input{Connected graph $\fixedGraph$, maximum radius $R$, reliability $p$, integer splitting factors $k_0, \dots, k_{R-1}$, initial number of particles $N$ }
\Output{Estimate of RCR}
$\mathscr{Y}\leftarrow \emptyset$ \tcp*{Samples that have hit an intermediate level}
\For{$i = 1$ \KwTo $N$}
{
  sample $\mathbf D$ from $\lowerBoundExt_0$\;
  \lIf{$I_0\(\mathbf D\)$}
  {
    add $\mathbf D$ to $\mathscr{Y}$
  }
}
\For{$r = 1$ \KwTo $R$}
{
  $\mathscr{Z} \leftarrow \emptyset$ \tcp*{Samples that have hit the next level}
  \For{$\mathbf d_{r-1} \in \mathscr{Y}$}
  {
    \For(\tcp*[f]{Split each particle $k_{r-1}$ times}){$i = 1$ \KwTo $k_{r-1}$} 
    {
      sample $\mathbf D$ from $\(\lowerBoundExt_{r} \lmid \lowerBoundExt_{r-1} = \mathbf d_{r-1}\)$\;
      \lIf{$I_r\(\mathbf D\)$}
      {
        add $\mathbf D$ to $\mathscr{Z}$
      }
    }
  }
  swap $\mathscr{Y}$ and $\mathscr{Z}$
}
\KwRet{$\abs{\mathscr{Y}}\(N\prod_{r=0}^{R-1} k_{r}\)^{-1}$}
\caption{Splitting algorithm for RCR\label{alg:fixed_splitting_for_network_reliability}}
\end{algorithm}

\subsection{Importance Sampling and Resampling Algorithms}\label{sec:importance_resampling_for_rcr}\setlength{\intextsep}{\savedintextsep}\setlength{\textfloatsep}{\savedtextfloatsep}

Experimentally we observe that most particles reach event $F_{R-1}$, but then tend not to reach event $F_R = F$. That is, we have decomposed the rare-event probability into a product of $R$ conditional probabilities, where the last conditional probability is very small and the others are relatively large. Typically in splitting we would solve this by adding events between $F_{R-1}$ and $F_R$. But due to the discrete nature of the constructed Markov chain it is not obvious how this can be done. Fortunately, the particles which reach $F_{R-1}$ have a very special structure. 

In graph theory a \emph{cut vertex} or \emph{articulation vertex} is a vertex of a graph which, when removed, increases the number of connected components. Now, consider the set $V_{\mathrm{cut}}$ of cut vertices of $\upperBound_{R-1}$. If $F_{R-1}$ occurs and some $\fixedVertex \in V_{\mathrm{cut}}$ is not contained in $\lowerBound_{R-1}$, then $\fixedVertex$ must be \up{} if $\upperBound_R = \Vup$ is to be connected. For a proof of this statement see Proposition \ref{prop:articulation_vertices_of_second_last}. Note that this only applies at step $R-1$, and not at any other step. Figure \ref{fig:articulation_vertices} gives a graphical representation of this statement. Figure \ref{fig:articulation_vertices_upper} shows $\upperBoundExt_2$, which is the set all of vertices that are possibly in $\Vup$. Figure \ref{fig:articulation_vertices_lower} shows the smaller set $\lowerBoundExt_2$, which is the set of all vertices that are definitely in $\Vup$. Figure \ref{fig:articulation_vertices_of_upper} shows all the cut vertices of $\upperBoundExt_2$, and Figure \ref{fig:articulation_vertices_of_upper_and_not_lower} shows the single cut vertex of $\upperBoundExt_2$ which is not in $\lowerBoundExt_2$. If this vertex is \down{} then $\Vup$ cannot be connected. 

\begin{algorithm}[!htb]
\SetAlgoSkip{}
\DontPrintSemicolon
\SetKwData{Weights}{\ensuremath{\mathscr{W}}}\SetKwData{CutVertices}{cut}
\SetKwInOut{Input}{input}\SetKwInOut{Output}{output}
\Input{Connected graph $\fixedGraph$, maximum radius $R$, reliability $p$, integer splitting factors $k_0, \dots, k_{R-1}$, initial number of particles $N$ }
\Output{Estimate of RCR}
$\mathscr{Y}\leftarrow \emptyset$ \tcp*{Samples that have hit an intermediate level}
\For{$i = 1$ \KwTo $N$}
{
  sample $\mathbf D$ from $\lowerBoundExt_0$\;
  \lIf{$I_0\(\mathbf D\)$}
  {
    add $\mathbf D$ to $\mathscr{Y}$
  }
}
\For{$r = 1$ \KwTo $R-1$}
{
  $\mathscr{Z} \leftarrow \emptyset$ \tcp*{Samples that have hit the next level}
  \For{$\mathbf d_{r-1} \in \mathscr{Y}$}
  {
    \For(\tcp*[f]{Split each particle $k_{r-1}$ times}){$i = 1$ \KwTo $k_{r-1}$} 
    {
      sample $\mathbf D$ from $\(\lowerBoundExt_{r} \lmid \lowerBoundExt_{r-1} = \mathbf d_{r-1}\)$\;
      \lIf{$I_r\(\mathbf D\)$}
      {
        add $\mathbf D$ to $\mathscr{Z}$
      }
    }
  }
  swap $\mathscr{Y}$ and $\mathscr{Z}$
}
$\Weights\leftarrow \emptyset$ \tcp*{Importance weights}
\For{$\mathbf d_{R-1} \in \mathscr{Y}$}
{
  $\CutVertices \leftarrow $ cut vertices of $\mathrm{Up}_{R-1}\(\mathbf y\)$\;
  \For(\tcp*[f]{Resample each particle $k_{R-1}$ times}){$i = 1$ \KwTo $k_{R-1}$}
  {
    sample $\mathbf D \sim \(\lowerBoundExt_R \lmid \lowerBoundExt_{R-1} = \mathbf d_{R-1}, \CutVertices \subseteq \Vup\)$ \tcp*{Imp. sampling}
    \lIf(\tcp*[f]{Add imp. weight}){$I_{R}\(\mathbf D\)$}
    {
      add $p^{\abs{\CutVertices\setminus \mathbf d_{R-1}}}$ to \Weights
    }
  }
}
\KwRet{$\(N\prod_{r=0}^{R-1} k_{r}\)^{-1}\sum_{w \in \Weights}w$} \tcp*{Return average of imp. weights}
\caption{Sequential importance sampling algorithm for RCR\label{alg:fixed_splitting_and_conditioning_for_network_reliability}}
\end{algorithm}

\begin{figure}
    \centering
    \subfigure[Value of $\upperBoundExt_2$. \label{fig:articulation_vertices_upper}]{
      \centering
      \includegraphics[scale=0.45]{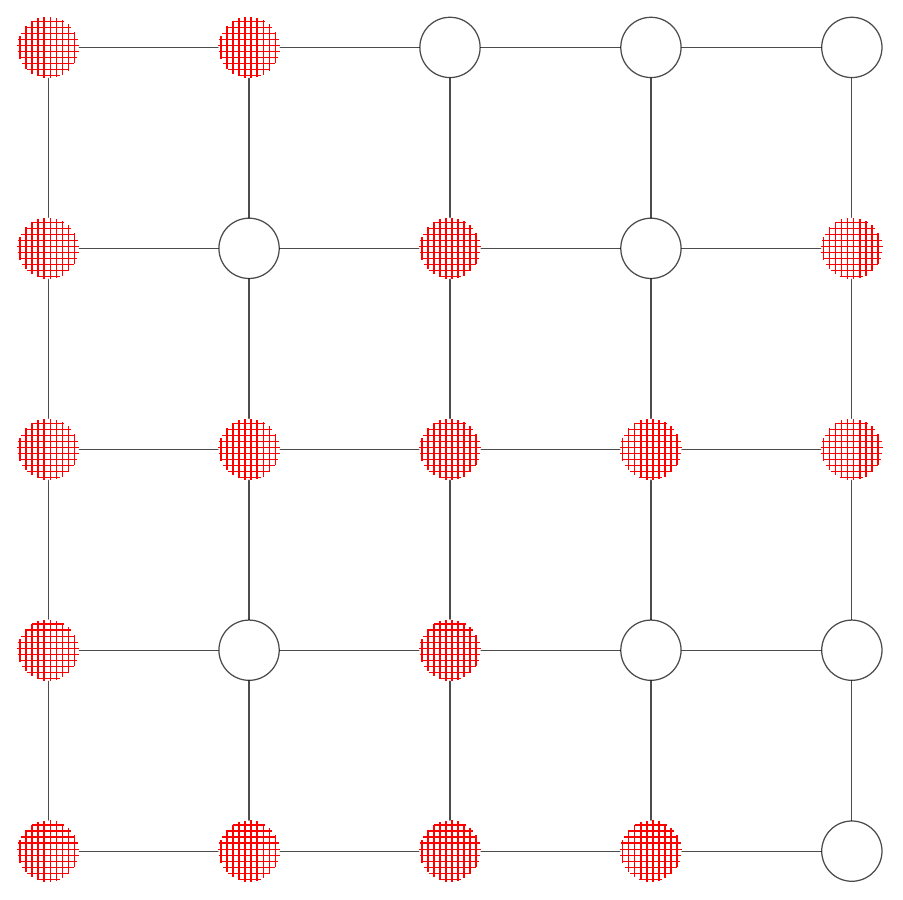}
    }
    \hspace{35pt}
    \subfigure[Value of $\lowerBoundExt_2$. \label{fig:articulation_vertices_lower}]{
      \centering
      \includegraphics[scale=0.45]{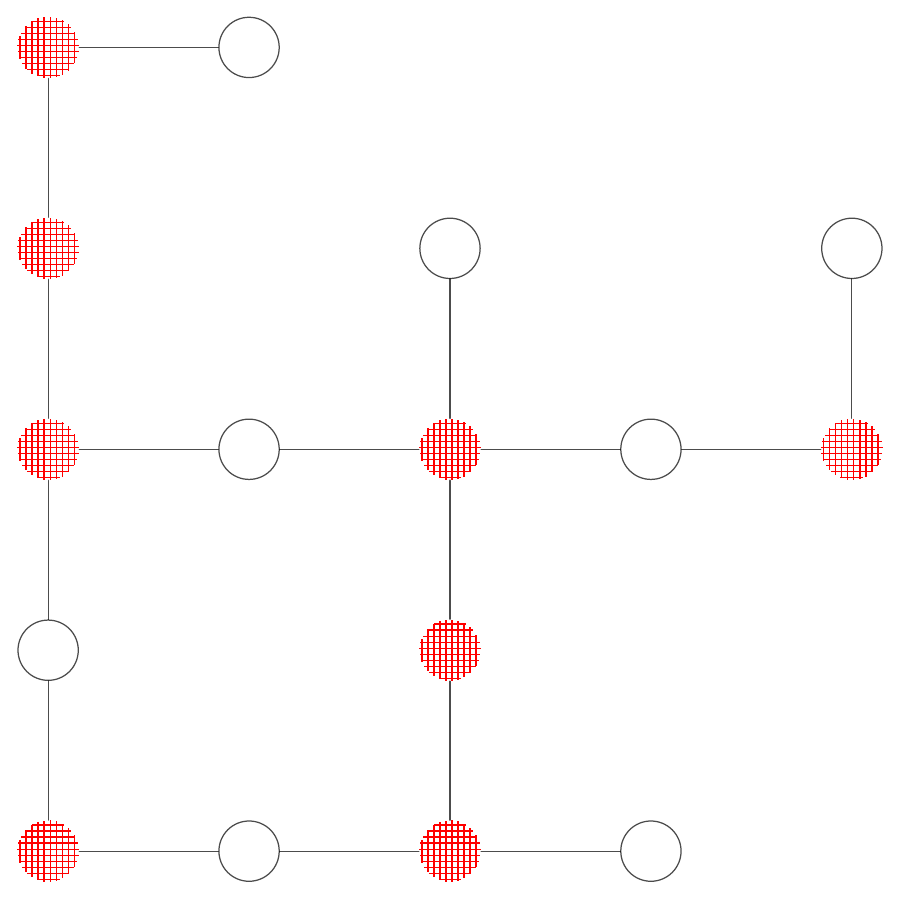}
    }
    \centering
    \subfigure[Cut vertices of $\upperBoundExt_2$. \label{fig:articulation_vertices_of_upper}]{
      \centering
      \includegraphics[scale=0.45]{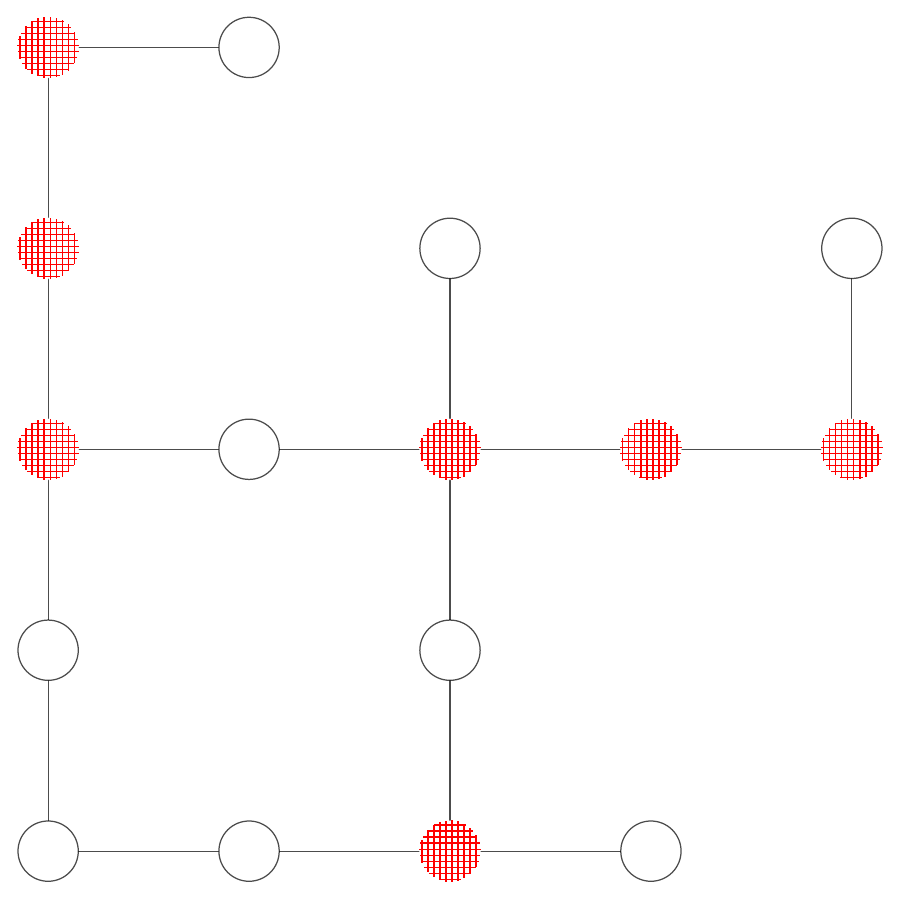}
    }
    \hspace{35pt}
    \subfigure[Cut vertices of $\upperBoundExt_2$ that are not in $\lowerBoundExt_2$. \label{fig:articulation_vertices_of_upper_and_not_lower}]{
      \centering
      \includegraphics[scale=0.45]{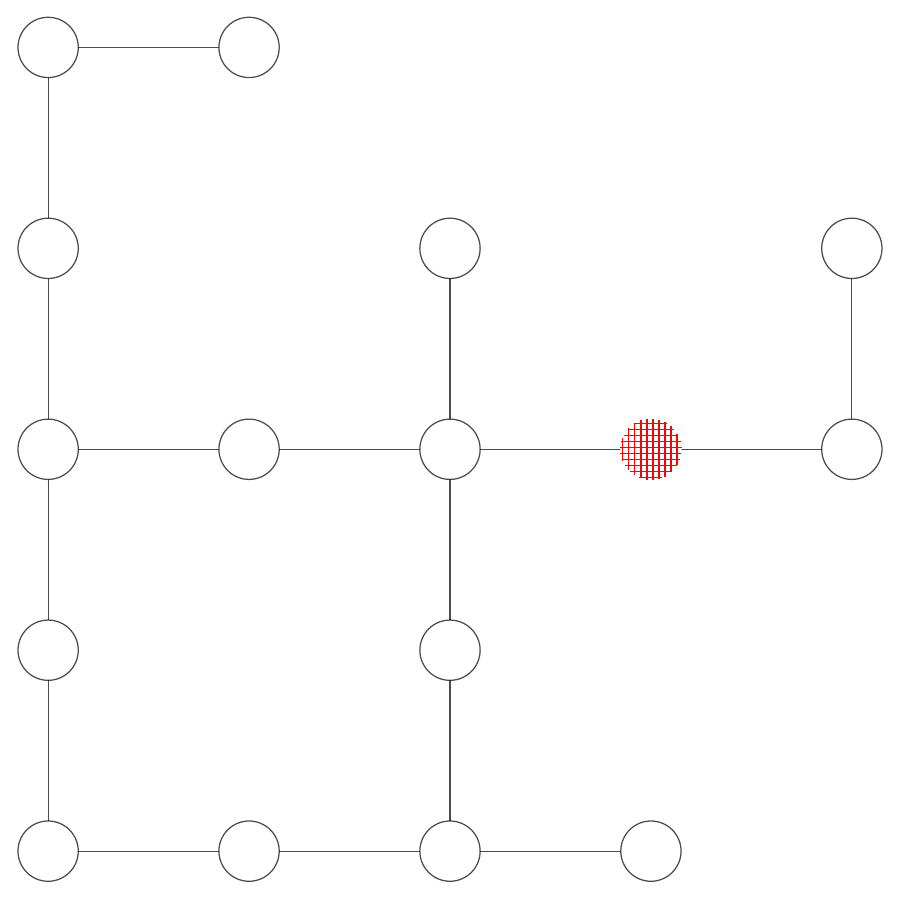}
    }
\caption{Visual representation of the result described in Proposition \ref{prop:articulation_vertices_of_second_last}, with $R = 3$. Figure \ref{fig:articulation_vertices_of_upper_and_not_lower} shows the single cut vertex of $\upperBoundExt_2$ that is not in $\lowerBoundExt_2$. This vertex must be \up{} in order for $\Vup$ to be connected. \label{fig:articulation_vertices}}
\end{figure}
\begin{figure}
\centering
    \subfigure[The single biconnected component of the value of $\upperBoundExt_2$ shown in Figure \ref{fig:articulation_vertices_upper} which contains more than two vertices. \label{fig:biconnected_components_1}]{
      \centering
      \includegraphics[scale=0.45]{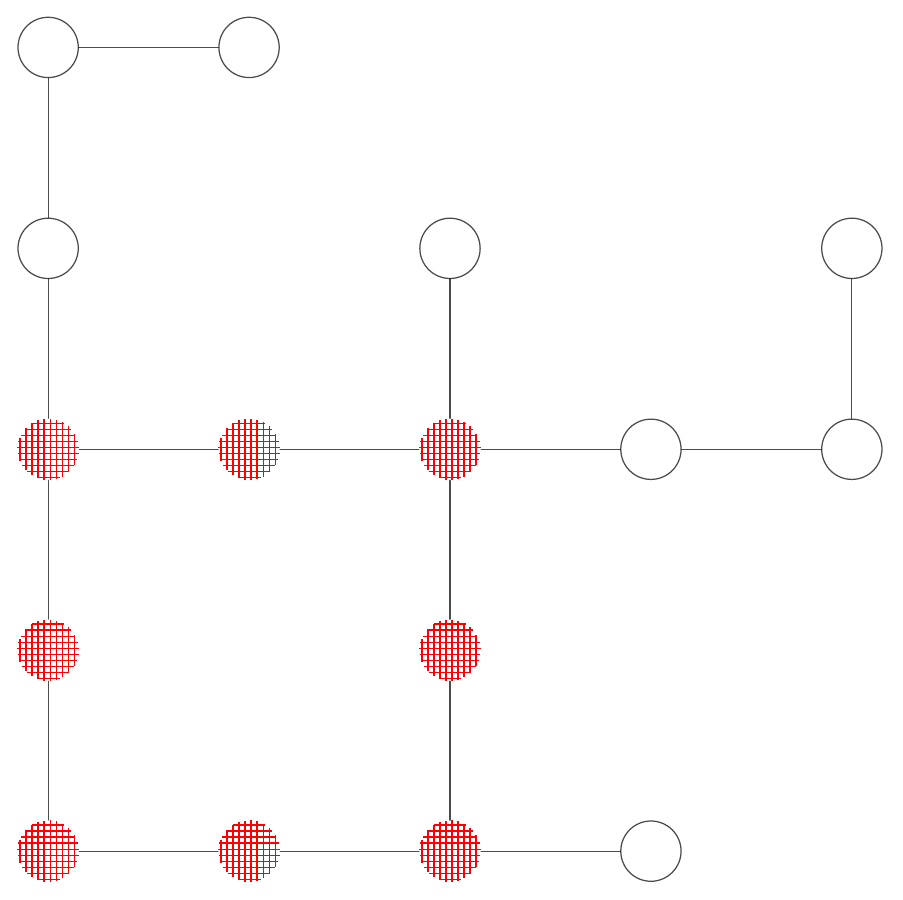}
    }
    \hspace{35pt}
  \subfigure[Another biconnected component of the value of $\upperBoundExt_2$ shown in Figure \ref{fig:articulation_vertices_upper}. \label{fig:biconnected_components_2}]{
      \centering
      \includegraphics[scale=0.45]{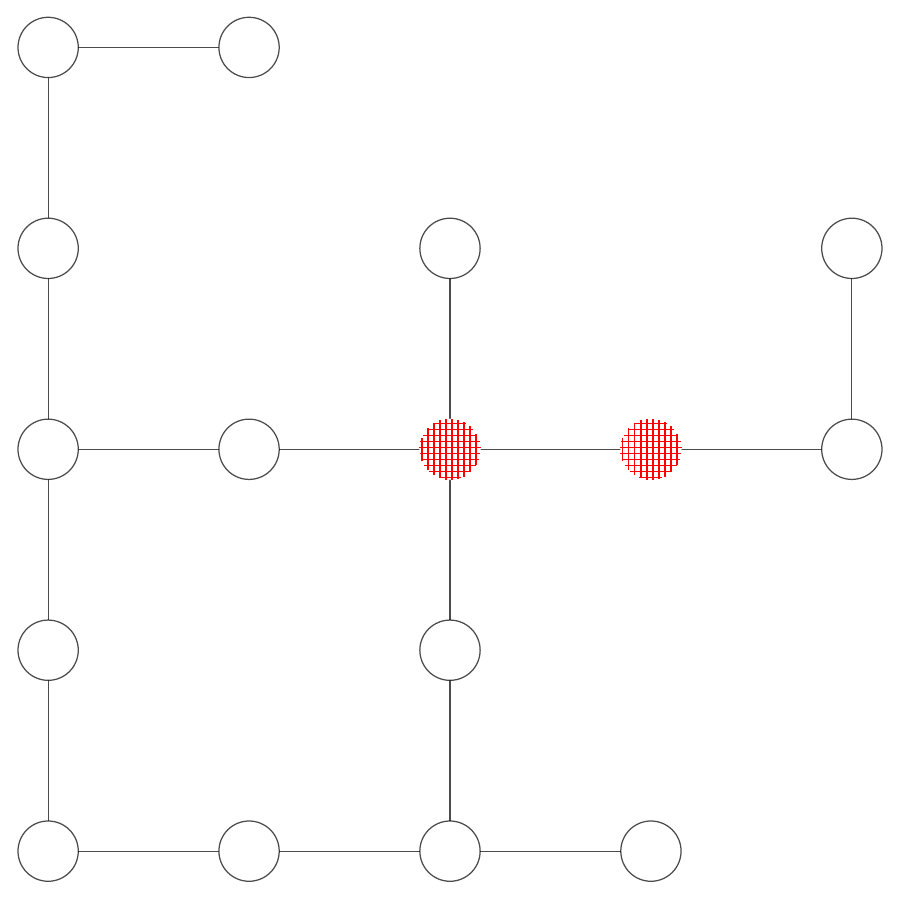}
    }
\caption{Biconnected components of the value of $\upperBoundExt_2$ shown in Figure \ref{fig:articulation_vertices_upper}.}
\end{figure}

Conditioning on these cut vertices all being \up{} leads to the importance resampling algorithm shown given in Algorithm \ref{alg:fixed_splitting_and_conditioning_for_network_reliability}. Note that the conditioning step is a type of importance sampling and leads to an algorithm involving weighted particles. The factor of $p^{\abs{V_{\mathrm{cut}} \setminus \mathbf y}}$ in the final estimator is the importance weight. 

The cut vertices of a graph can be used to decompose it into \emph{biconnected components}. A \emph{biconnected graph} is a graph which has no cut vertices, and therefore cannot be disconnected by the removal of a single vertex. A \emph{biconnected component} is a maximal biconnected subgraph of some underlying graph. The value of $\upperBoundExt_2$ shown in Figure \ref{fig:articulation_vertices_upper} has only one biconnected component with more than two vertices. This subgraph is shown in Figure \ref{fig:biconnected_components_1}, and another component is shown in Figure \ref{fig:biconnected_components_2}. Note that the biconnected components are not disjoint and share vertices. Any vertex contained in more than one biconnected component must be a cut vertex. 

Proposition \ref{prop:articulation_biconnected_decomposition} shows that if $F_{R-1}$ occurs, then $\Vup$ is connected if and only if $V_{\mathrm{cut}}\subseteq \Vup$ and the `restriction' of $\Vup$ to every biconnected component of $\upperBoundExt_{R-1}$ is a connected graph. But the connectivity of $\Vup$ restricted to one component is independent of the connectivity of $\Vup$ restricted to another component. We can use this independence to estimate $\P\(F \lmid \lowerBoundExt_{R-1} = \lowerboundExt_{R-1}\)$ more efficiently. This leads to Algorithm \ref{alg:sis} (SIS). 

\begin{algorithm}[!htb]
\SetAlgoSkip{}
\DontPrintSemicolon
\SetKwData{Weights}{\ensuremath{\mathscr{W}}}\SetKwData{CutVertices}{cut}\SetKwData{BiConnected}{bi}\SetKwData{Count}{count}\SetKwData{Resid}{resid}
\SetKwInOut{Input}{input}\SetKwInOut{Output}{output}
\Input{Connected graph $\fixedGraph$, maximum radius $R$, reliability $p$, integer splitting factors $k_0, \dots, k_{R-1}$, initial number of particles $N$ }
\Output{Estimate of RCR}
Identical to lines 1 - 14 of Algorithm \ref{alg:fixed_splitting_and_conditioning_for_network_reliability}\;
\For{$\mathbf d_{R-1} \in \mathscr{Y}$}
{
  $\CutVertices \leftarrow $ cut vertices of $\mathrm{Up}_{R-1}\(\mathbf d_{R-1}\)$\;
  $\BiConnected \leftarrow $ biconnected components of $\mathrm{Up}_{R-1}\(\mathbf d_{R-1}\)$\;
  $\Count_1\leftarrow 0, \dots, \Count_{\abs{\BiConnected}} \leftarrow 0, \Resid \leftarrow 0$\;
  \For(\tcp*[f]{Simulate most copies efficiently}){$i = 1$ \KwTo $\floor{\sqrt[\abs{\BiConnected}]{k_{R-1}}}$}
  {
    sample $\mathbf D$ from $\(\lowerBoundExt_R \lmid \lowerBoundExt_{R-1} = \mathbf d_{R-1}, \CutVertices \subseteq \Vup\)$\;
    \For{$j = 1$ \KwTo $\abs{\BiConnected}$}
    {
      \lIf{$\varphi\(\mathbf D\cap \BiConnected_j\)$}
      {
        $\Count_j \leftarrow \Count_j+1$
      }
    }
  }
  \For(\tcp*[f]{Simulate remaining copies}){$i = \floor{\sqrt[\abs{\BiConnected}]{k_{R-1}}}^{\abs{\BiConnected}} + 1$ \KwTo $k_{R-1}$}
  {
    sample $\mathbf D$ from $\(\lowerBoundExt_R \lmid \lowerBoundExt_{R-1} = \mathbf d_{R-1}, \CutVertices \subseteq \Vup\)$\;
    \lIf{$\varphi\(\mathbf D\)$}
    {
      $\Resid \leftarrow \Resid + 1$
    }
  }
  add $p^{\abs{\CutVertices\setminus \mathbf d_{R-1}}} \(\prod_{j=1}^{\abs{\BiConnected}}\Count_j + \Resid\)$ to \Weights
}
\KwRet{$\(N\prod_{r=0}^{R-1} k_{r}\)^{-1}\sum_{w \in \Weights}w$}
\caption{Sequential imp. sampling algorithm for RCR (SIS)\label{alg:sis}}
\end{algorithm}

We can also apply the same conditioning ideas at steps other than $R - 1$. This has the additional complication that not every cut vertex of $\upperBound_r$ is in fact required to be \up{} in order for $\Vup$ to be connected. However if $v_{\mathrm{cut}}$ splits $\upperBound_r$ into components, at least two of which contain vertices of $\lowerBound_r$, then $v_{\mathrm{cut}}$ must be \up{}. This property can be checked by performing a depth-first search started at all vertices adjacent to $v_{\mathrm{cut}}$. This much more `aggressive' conditioning leads to sample paths with significantly different weights. It is therefore sensible to substitute resampling for splitting, as this tends to automatically remove sample paths with small weights. These ideas lead to Algorithm \ref{alg:sir} (SIR).

\begin{algorithm}[!htb]
\SetAlgoSkip{}
\DontPrintSemicolon
\SetKwData{Weights}{\ensuremath{\mathscr{W}}}
\SetKwData{AverageWeight}{averageWeight}
\SetKwData{CutVertices}{cut}
\SetKwData{ConditionVertices}{cond}
\SetKwData{Components}{components}
\SetKwData{Comp}{comp}
\SetKwData{BiConnected}{bi}\SetKwData{Count}{count}\SetKwData{Resid}{resid}
\SetKwInOut{Input}{input}\SetKwInOut{Output}{output}
\Input{Connected graph $\fixedGraph$, maximum radius $R$, reliability $p$, initial number of particles $N$ }
\Output{Estimate of RCR}
$\mathscr{Y}\leftarrow \emptyset$ \tcp*{Samples that have hit an intermediate level}
$\Weights\leftarrow \emptyset$ \tcp*{Importance weights}
\For{$i = 1$ \KwTo $N$}
{
  sample $\mathbf D$ from $\lowerBoundExt_0$\;
  \lIf{$I_0\(\mathbf D\)$}
  {
    add $\mathbf D$ to $\mathscr{Y}$, add $1$ to \Weights
  }
}
\For{$r = 1$ \KwTo $R$}
{
  $\AverageWeight \leftarrow \abs{\Weights}^{-1} \sum_{w \in \Weights}$\;
  $\mathscr{Z} \leftarrow $ with repl. sample of size $N$ from $\(\mathscr{Y}, \Weights\)$ \tcp*{Resampling}
  $\Weights \leftarrow \emptyset, \mathscr{Y} \leftarrow \emptyset$\;
  \For{$\mathbf d_{r-1} \in \mathscr{Z}$}
  {
    $\CutVertices \leftarrow $ cut vertices of $\mathrm{Up}_r\(\mathbf d_{r-1}\), \ConditionVertices \leftarrow \emptyset$\;
    \For{$c \in \CutVertices$}
    {
      $\Components \leftarrow $ connected components of $\mathrm{Up}_r\(\mathbf d_{r-1}\) \setminus c$\;
      $n \leftarrow \abs{\lcu \Comp \in \Components \colon \Comp \cap \mathbf d_{r-1} \neq \emptyset\rcu}$\;
      \lIf{$n \geq 2$}
      {
        add $c$ to $\ConditionVertices$
      }
    }
    sample $\mathbf D$ from $\(\lowerBoundExt_{r} \lmid \lowerBoundExt_{r-1} = \mathbf z, \ConditionVertices \subseteq \Vup\)$ \tcp*{Imp. sampling}
    \If{$I_r\(\mathbf D\)$}
    {
      add $\mathbf D$ to $\mathscr{Y}$, add $\AverageWeight\ p^{\abs{\ConditionVertices\setminus \mathbf z}}$ to $\Weights$\;
    }
  }
}
\KwRet{$N^{-1}\sum_{w \in \Weights}w$} \tcp*{Return average of importance weights}
\caption{Sequential imp. resampling algorithm for RCR (SIR)\label{alg:sir}}
\end{algorithm}

\section{The Transfer Matrix Method\label{sec:transfer_matrix_method}}

In order to compare the different Monte Carlo methods in Section \ref{sec:numerical_results} it is useful to use graphs that are reasonably large and for which the RCR is exactly known. Assume that for $0 \leq i \leq \abs{\fixedVertexSet}$ the number of vertex subsets with $i$ vertices that induce a connected subgraph is $c_i$. Then
\begin{align*}
\P\(\varphi\(\Vup\) = 1\) &= \sum_{i=0}^{\abs{\fixedVertexSet}} \P\(\varphi\(\Vup\) = 1 \lmid \abs{\Vup} = i\){{\abs{\fixedVertexSet}} \choose i} p^{i} \(1 - p\)^{\abs{\fixedVertexSet} - i}\\
&=\sum_{i=0}^{\abs{\fixedVertexSet}} \frac{c_i}{{{\abs{\fixedVertexSet}} \choose i}} {{\abs{\fixedVertexSet}} \choose i} p^{i} \(1 - p\)^{\abs{\fixedVertexSet} - i} =\sum_{i=0}^{\abs{\fixedVertexSet}} c_i p^{i} \(1 - p\)^{\abs{\fixedVertexSet} - i}.
\end{align*}
Computation of the values $c_i$ has been shown to be \#P complete \citep{Sutner1991}. However for certain classes of regular graphs the \emph{transfer matrix method} from statistical physics \citep{Klein1986,Kloczkowski1998,Clisby2012} can be used to compute the integers $c_i$ relatively quickly. Once the numbers $c_i$ are known, the RCR can be computed quickly for any value of $p$, to arbitrary accuracy. 

As an illustration of the transfer matrix method assume that we are interested only in computing the \emph{total} number of connected subgraphs of the $5 \times 5$ grid graph. The graph can be considered as five vertical `slices' consisting of five vertices each. As we move from left to right the  `state' of the graph at a particular point consists of the states of those five vertices and information about the paths that connect them which lie to the left. Six of the 52 possible `states' are shown in Figure \ref{fig:transferMatrix}. In the first example state there are three vertices present, which are all connected together by a path lying to the left. In the fourth example state there are three vertices present, and the first and last are connected by a path lying to the left. The middle vertex is not connected to the other two by a path lying to the left. In the third example state all three vertices are connected by a path which lies to the left, although the lower two vertices are necessarily connected as they are adjacent. These 52 states include two `empty' states where no vertices are present; one which occurs before any `state' which contain a vertex, and one which occurs after a `state' which contains a vertex. 

When a connected subgraph is decomposed in this manner there are only a certain number of possibilities for the initial and final states. For example the first state in Figure \ref{fig:transferMatrix} is not possible as an initial state and state 4 as a final state would result in a disconnected subgraph. Further only a limited number of transitions between states are allowed. The number of connected subgraphs can therefore be written as $\mathbf x^T \mathbf B^4 \mathbf y$. Here $\mathbf x$ is a binary vector containing the value $1$ if a state is a possible initial state for a connected subgraph, and $\mathbf y$ is a binary vector containing the value $1$ if a state is a possible final state for a connected graph. The $52 \times 52$ binary matrix $\mathbf B$ contains the value $1$ if a transition between two states is possible for a connected subgraph. This approach can be applied in general to regular planar graphs. By keeping a running count of the number of vertices we can determine the number of connected subgraphs for every total number of vertices. Our implementation made use of the Eigen linear algebra library \citep*{Eigen}, the MPFR numeric library \citep{MPFR} and Boost C++ libraries.
\begin{figure}
\centering
\includegraphics[scale=1.5]{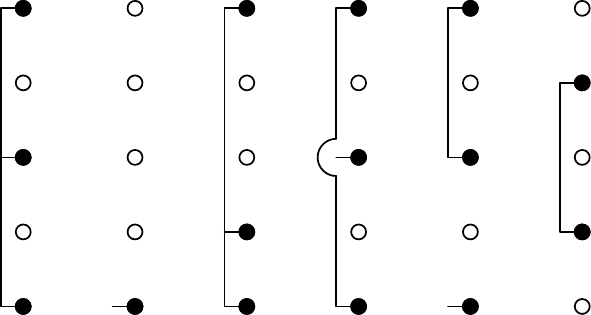}
\caption{A sample of the six of the 52 possible states for the transfer matrix, for the $5 \times 5$ grid graph. \label{fig:transferMatrix}}
\end{figure}

\section{Numerical Experiments\label{sec:numerical_results}}

We performed a simulation study to compare four Monte Carlo (MC) methods for the RCR problem. The four methods were conditional MC \citep{Shah2014}, the RVR method of \cite{Cancela2002}, the sequential importance sampling algorithm given as Algorithm \ref{alg:sis} (SIS) and the sequential importance resampling algorithm given as Algorithm \ref{alg:sir} (SIR). 

We previously assumed that the reliability of every vertex was equal. Not only is this notationally convenient, but it allows us to compute the exact RCR using the transfer matrix method if $\fixedGraph$ is a grid graph. The largest graph for which we applied the transfer matrix method was the $11 \times 11$ grid graph. Although there are $2^{121}$ possible subgraphs, the transfer matrix method allows the computation of the $c_i$ in around an hour. The largest number of connected subgraphs occurred when $i = 75$, for which the number of connected subgraphs was on the order of $10^{31}$. Note that this is well outside the maximum integer value of $2^{64}-1$ representable on a computer without more specialized software. 

We compared the four methods on square grid graphs with length $8, 9, 10$ and $11$. The parameter $p$ was varied between $0.1$ and $0.65$ in steps of $0.05$. At larger and smaller values of $p$ the target event was no longer rare and we do not consider these cases interesting. In addition we also applied all methods to the $14 \times 14$ grid graph, for which it is not possible to compute the unreliability exactly. 

The methods were compared on the basis of their estimated \emph{relative error} (RE) and \emph{work normalized relative variance} (WNRV). For an estimator $\hat{p}$ of a probability $p$ which takes on average time $T$ to compute, these quantities are defined as
\begin{align*}
\mathrm{RE}\(\hat{p}\) &= \frac{\sqrt{\mathrm{Var}\(\hat{p}\)}}{p},&& \mbox{ and } && \mathrm{WNRV}\(\hat{p}\) = \frac{T \mathrm{Var}\(\hat{p}\)}{p^2}.
\end{align*}
In order to estimate the RE and WNRV, each method was run 100 times. Sample sizes for every method were $10^6$ per run. For \SIS the splitting factors are a required input to the algorithm. A separate run with sample size $10^6$ and splitting factors identically $1$ was performed in order to estimate the splitting factors. These estimated values were then used for all the subsequent $100$ runs. The time required to estimate these splitting factors is not included in the work normalized results. 

The simulation study shows that the RVR method is not competitive with the other three methods tested. For the $8 \times 8$ grid graph the average result from the RVR method is numerically accurate for the exactly computed value, although the estimated RE (Figure \ref{fig:re_8}) and WNRV (Figure \ref{fig:wnrv_8}) show that the RVR method is inferior to the other methods. The shading in Figure \ref{fig:re_8} shows 95\% confidence intervals obtained by bias-corrected and accelerated (BCA) bootstrapping \citep{Davison1997} using the R package \texttt{boot} \citep{bootstrapPackage}. In the case of the $11 \times 11$ grid graph the RVR method also performs poorly, with the added problem that for $p$ between $0.2$ and $0.35$ the RVR method is not close to the exact probability (Figure \ref{fig:true_11}). 

\begin{figure}[!tb]
	\centering
	\IfFileExists{./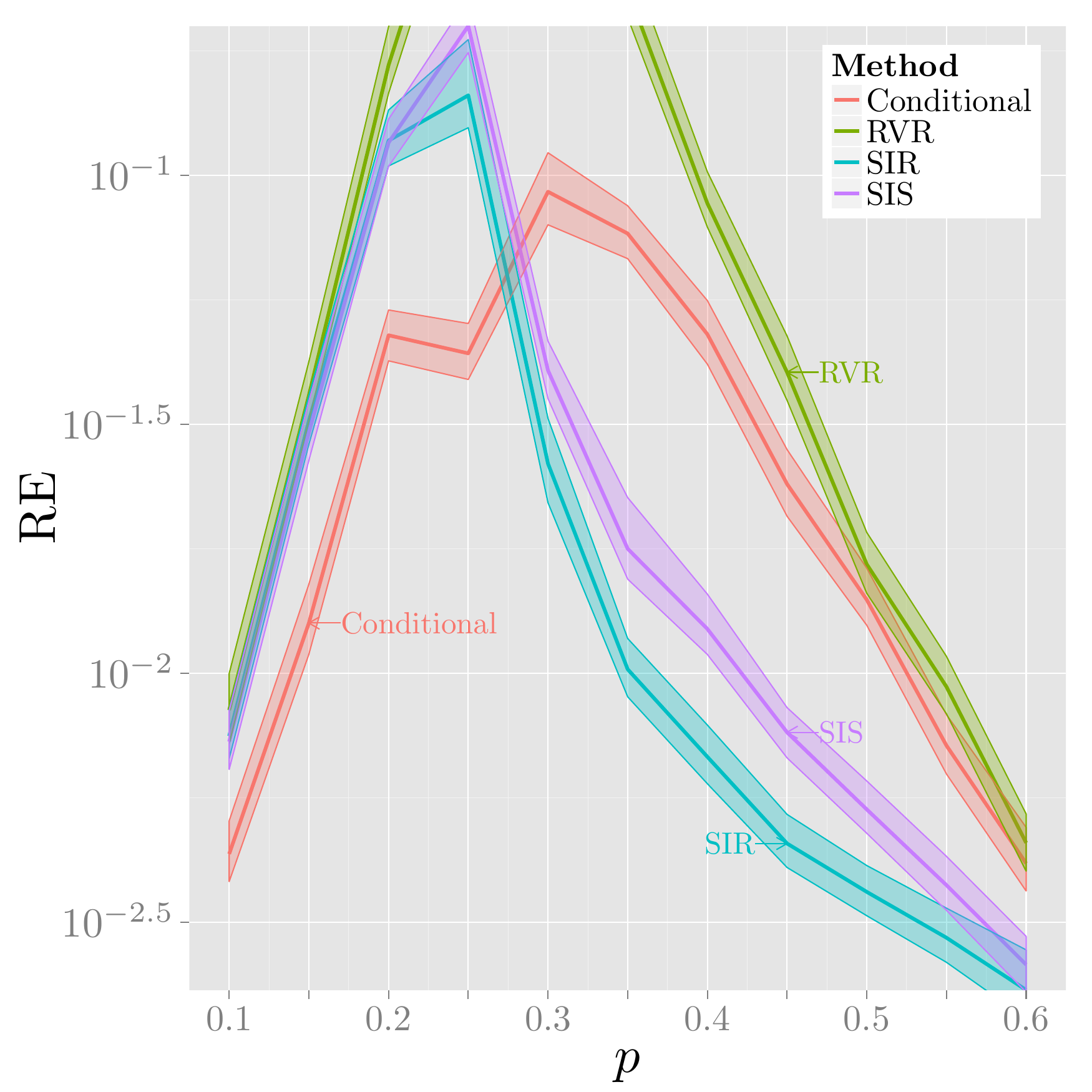}
		{
			\includegraphics[scale=0.48,page=1]{8ggplot.pdf}
		}
		{
  			\includegraphics[scale=0.48,page=1]{Simulations/8ggplot.pdf}
		}
	\caption{Relative error results for the $8 \times 8$ grid graph. Shading represents bootstrapped 95\% confidence intervals. \label{fig:re_8}}
\end{figure}
\begin{figure}[!htb]
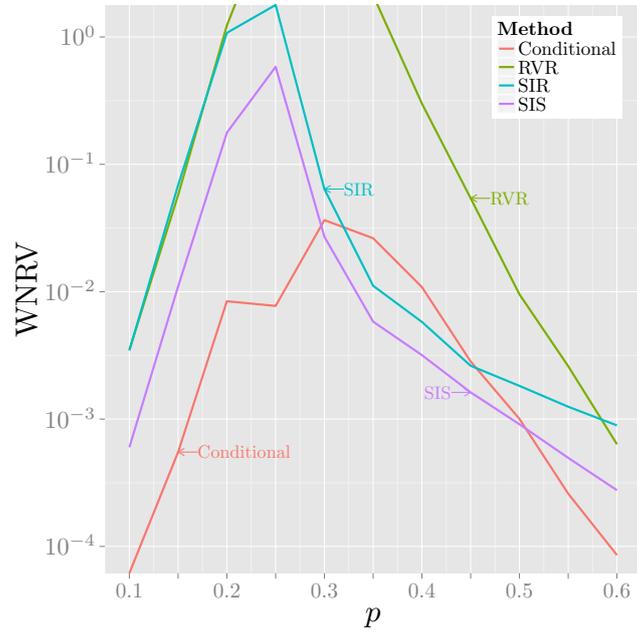

    \centering
	\IfFileExists{./8ggplot.pdf}
    	{
			\includegraphics[scale=0.48,page=2]{8ggplot.pdf}
		}
		{
  			\includegraphics[scale=0.48,page=2]{Simulations/8ggplot.pdf}
		}
    \caption{Work normalized relative variance $8 \times 8$ grid graph. \label{fig:wnrv_8}}
\end{figure}
\begin{figure}[!htb]
    \centering
    	\IfFileExists{./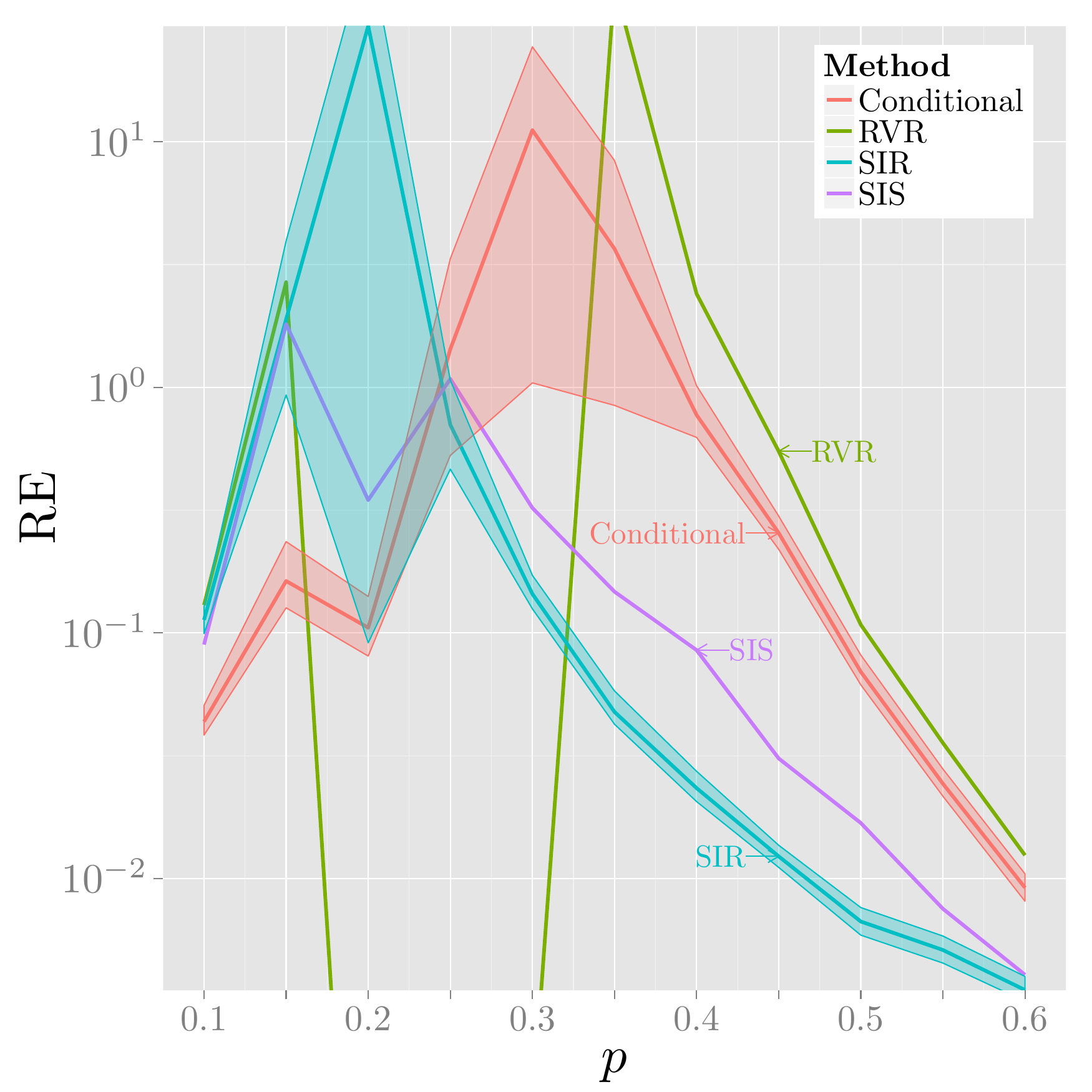}
    	{
			\includegraphics[scale=0.48,page=3]{11ggplot.pdf}
		}
		{
  			\includegraphics[scale=0.48,page=3]{Simulations/11ggplot.pdf}
		}
    \caption{The absolute empirical bias as a proportion of the true value, over 100 replications for the $11 \times 11$ grid graph. \label{fig:true_11}}
\end{figure}

In every case either \SIR or conditional MC has the lowest relative error. Which of these performs best depends on $p$. In general for $p < 0.25$ conditional MC performs best but for $p > 0.25$ \SIR performs best. A similar pattern is observed for the work normalized relative variance, although \SIS and \SIR and conditional MC are somewhat closer in terms of performance. The value of $p = 0.25$ separating these behaviors is believed to be an artifact of our choice of grid values for $p$. The value of $0.25$ is the value that is closest to the value $p^*$ of $p$ that solves
\begin{align*}
p &= \E\[\abs{\Vup}\abs{\fixedVertexSet}^{-1}\lmid F\]. 
\end{align*}
That is, the threshold value $p^*$ is the value of $p$ for which the expected proportion of \up{} vertices under the conditional distribution is exactly $p$. For example, Figure \ref{fig:expected_proportion_11} shows the expected proportion of \up{} vertices for the $11 \times 11$ grid graph, with a vertical line at $p^* = 0.2454$. Note that in Figure \ref{fig:true_11}, for $p < p^*$ \SIS and \SIR do not always converge numerically to the true value using $100$ replications, even on a log scale. Similarly, for $p > p^*$ conditional MC does not always converge to the true value using $100$ replications. 

\begin{figure}[htb]
	\centering
	\includegraphics[scale=0.48,page=1]{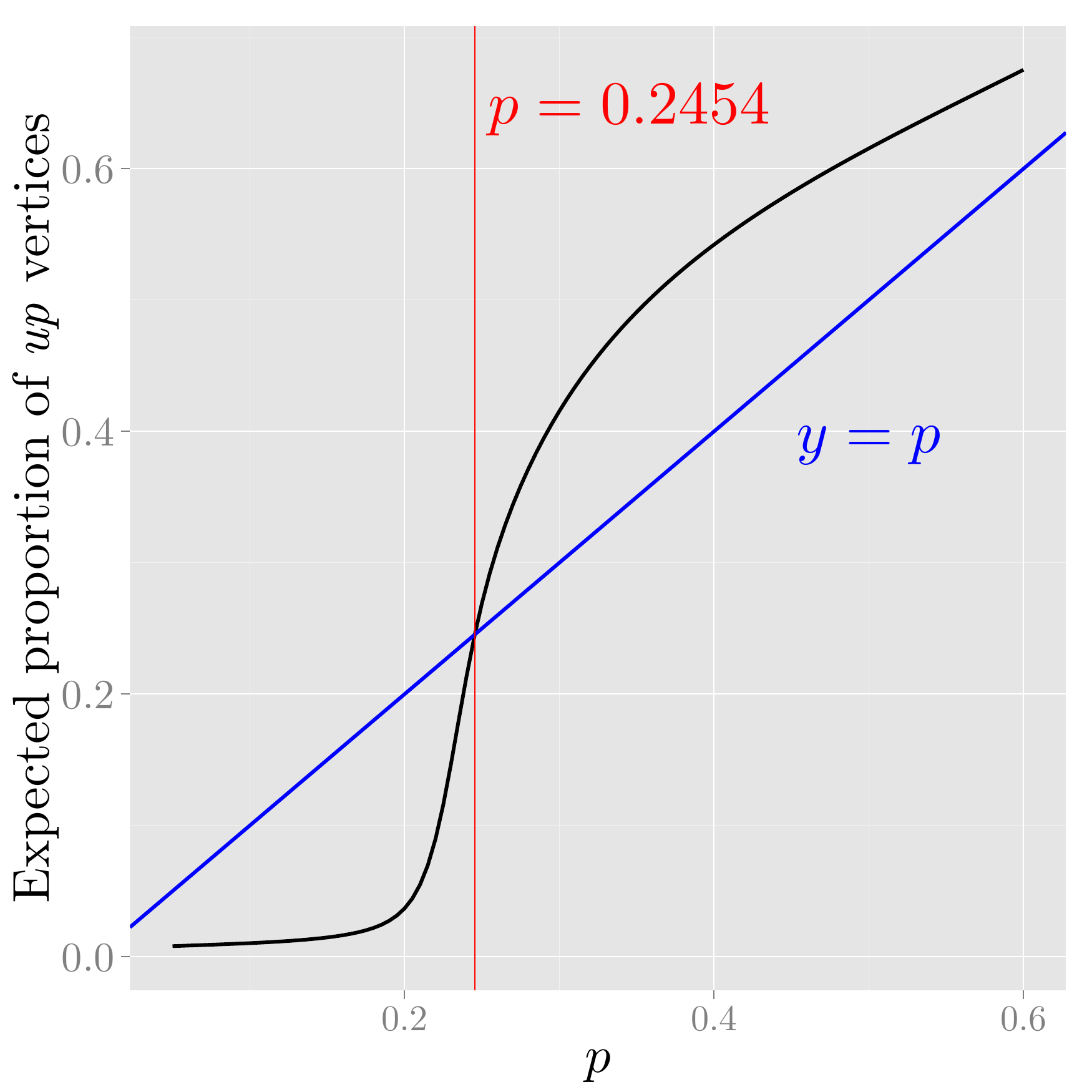}
    \caption{Expected proportion of \up{} vertices for the $11 \times 11$ grid graph, under the conditional distribution. The blue line represents $y = x$, the red line represents $p = 0.2454$.\label{fig:expected_proportion_11}}
\end{figure}

The simulation results suggest that very different strategies must be employed to estimate the RCR, depending on the value of $p$. At some value of $p^*$ (which we assume is unique) the expected fraction of \up{} vertices under the conditional distribution is equal to $p^*$. For values of $p$ slightly smaller than $p^*$ the expected fraction of \up{} vertices is much smaller than $p$, and for values of $p$ slightly larger than $p^*$ the expected fraction of \up{} vertices is much larger than $p$. This occurs because for $p$ smaller than $p^*$ the `easiest' way to obtain a connected graph is to delete all but one of the connected components obtained under the unconditional measure. However if $p$ is large then the easiest way to obtain a connected graph is to add extra vertices, connecting up the components generated under the unconditional measure into a single component. \SIS and \SIR take the approach of adding extra vertices, and are therefore only useful when $p > p^*$. \SIR is more aggressive than \SIS in adding vertices, and the difference in performance between the two cases is therefore more pronounced. Conditional MC can be interpreted as `deleting' vertices, and is therefore efficient for $p < p^*$. 

Figure \ref{fig:re_11} gives the relative errors for the four tested algorithms for the $11 \times 11$ grid graph. The shaded regions indicate 95\% confidence intervals for the RE for the SIR and conditional MC methods, obtained using BCA bootstrapping. The relative errors of \SIR and conditional MC differ by a factor of up to 100, and which method performs better depends on whether $p < p^*$ or $p > p^*$. On a work normalized basis (Figure \ref{fig:wnrv_11}) conditional MC outperforms \SIR by a factor of up to $10^5$ for $p < p^*$, but \SIR outperforms conditional MC by a factor of up to $10^2$ for $p > p^*$. 

\begin{figure}[!htb]
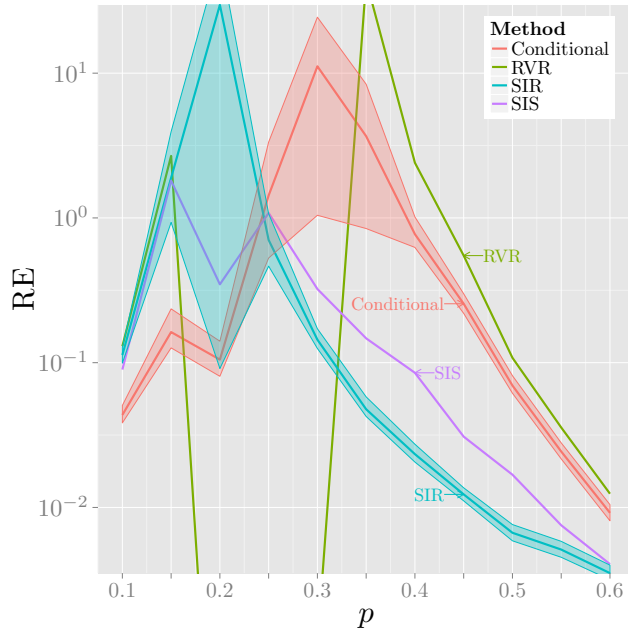

  	\centering
	\IfFileExists{./11ggplot.pdf}
    	{
			\includegraphics[scale=0.48,page=1]{11ggplot.pdf}
		}
		{
  			\includegraphics[scale=0.48,page=1]{Simulations/11ggplot.pdf}
		}
	\caption{Relative error results for the $11 \times 11$ grid graph. Shading represents bootstrapped 95\% confidence intervals for Conditional and SIR.\label{fig:re_11}}
\end{figure}
\begin{figure}[!htb]
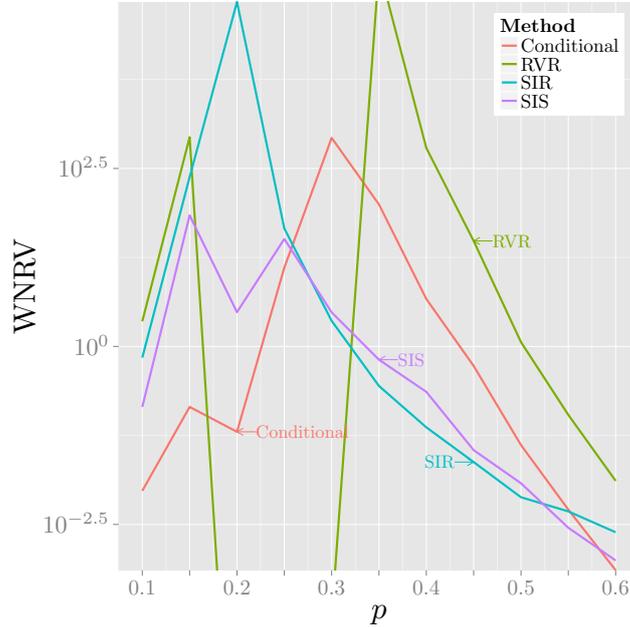

	\centering
	\IfFileExists{./11ggplot.pdf}
    	{
			\includegraphics[scale=0.48,page=2]{11ggplot.pdf}
		}
		{
  			\includegraphics[scale=0.48,page=2]{Simulations/11ggplot.pdf}
		}
	\caption{Work normalized relative variance for the $11 \times 11$ grid graph. \label{fig:wnrv_11}}
\end{figure}

Figure \ref{fig:re_14} shows the relative errors of the tested algorithms for the $14 \times 14$ grid graph. The shaded regions indicate 95\% confidence intervals for the RE for the SIR and conditional MC algorithms, obtained using BCA bootstrapping. We do not show results from the RVR method as the estimated relative errors were orders of magnitude higher than those for the conditional, SIR and SIS algorithms. For the purposes of calculating the relative errors we used the estimates given by conditional MC for $p \leq 0.25$ and the estimates given by the SIR method for $p > 0.25$. 

Note that for $p < 0.25$ the SIR method appears to have lower relative error. However the relative error estimates are believed to be inaccurate in this case, similar to the situation for the RVR method on the $11 \times 11$ grid graph (Figure \ref{fig:re_11}). The true RE for the SIR method is believed to be orders of magnitude larger than the RE for conditional MC for $p < 0.25$. The opposite situation occurs for $p \geq 0.25$; the relative error of conditional MC appears to decrease as $p$ increases to $0.35$, but this is believed to be because the RE is poorly estimated in this case. The true RE for conditional MC is believed to be orders of magnitude higher than the RE for the SIR method, for $0.3 < p < 0.45$. The relative errors are believed to be well estimated for all three methods for $0.45 \leq p \leq 0.6$, and in this region the relative error of the SIR method is a factor of $100$ smaller than the RE for conditional MC. For the cases where the relative errors are poorly estimated, it is believed to be computationally infeasible to run the relevant algorithm enough times to obtain an accurate estimate. 
\begin{figure}[htb]
	\centering
  	\IfFileExists{./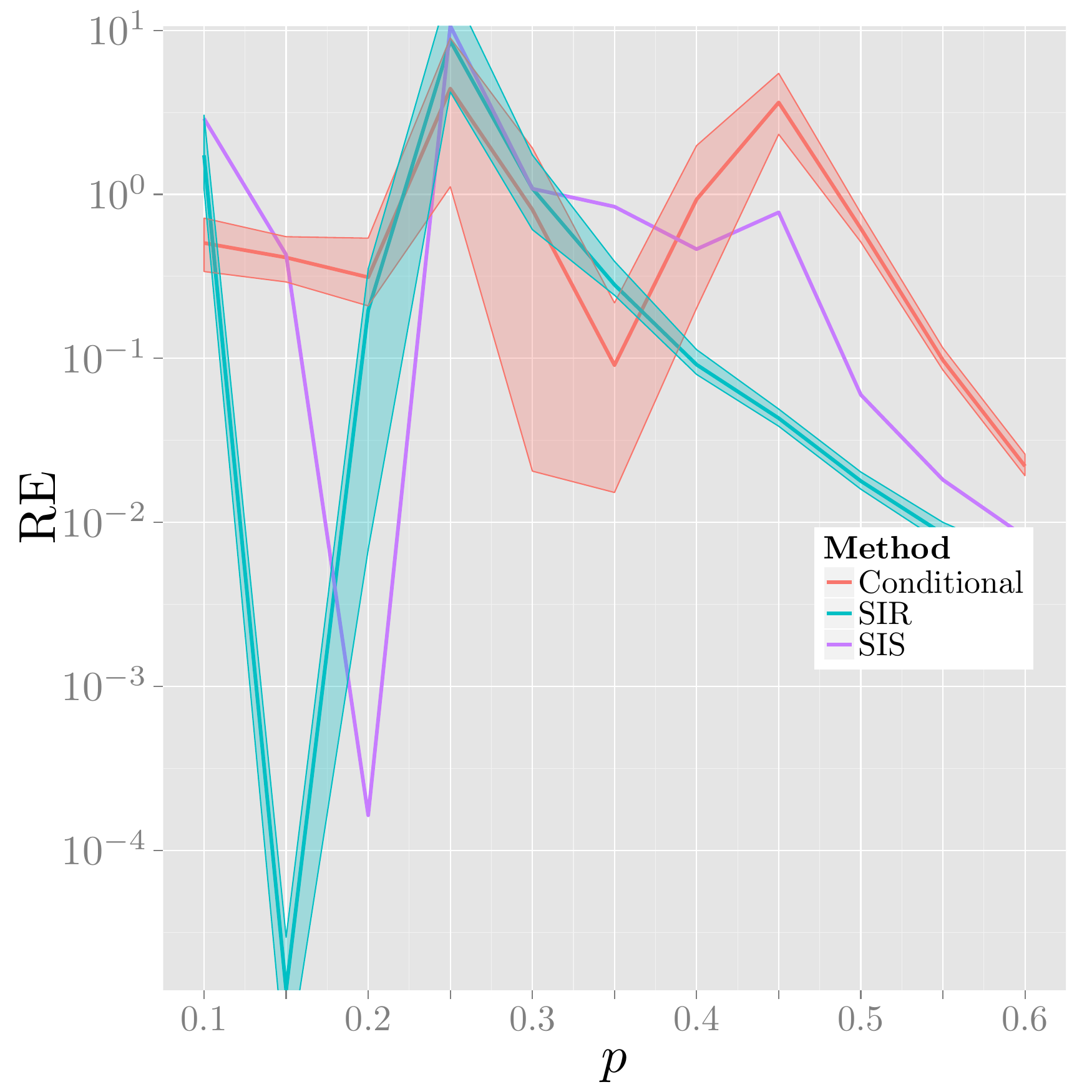}
    	{
			\includegraphics[scale=0.48,page=1]{14ggplot.pdf}
		}
		{
  			\includegraphics[scale=0.48,page=1]{Simulations/14ggplot.pdf}
		}
	\caption{Relative error results for the $14 \times 14$ grid graph. Shading represents bootstrapped 95\% confidence intervals for the SIR and conditional MC algorithms. \label{fig:re_14}}
\end{figure}

Note that the SIR and SIS algorithms are not well suited to the situation where $\fixedGraph$ is a large subset of a complete graph. In that case every vertex is adjacent to every other vertex. So $\lowerBoundExt_0, \dots, \lowerBoundExt_{R-1}$ represent knowledge of only the first \up{} vertex, and $\lowerBoundExt_R$ represents knowledge of the state of \emph{every} vertex. In this case the SIR and SIS algorithms we describe are similar to crude MC or the RVR method. Further work is required to find methods that behave well in this situation. Another direction for further work is to find algorithms that efficiently estimate the RCR in the regime where $p$ is small. Although conditional MC works acceptably well, it is somewhat unsophisticated. For the special value $p = p^*$ (which appears to be around $0.25$ in our examples) none of the algorithms we tested work well. As shown in Figure \ref{fig:re_11}, the SIS, SIR and conditional MC algorithms all had a relative error close to $1$ in that case. This case appears to be particularly difficult as methods based around increasing or decreasing the number of \up{} vertices will not be effective.

\section{Acknowledgments}

This work was supported by the Australian Research Council Centre of Excellence for Mathematical \& Statistical Frontiers, under grant number CE140100049. 

\appendix 

\section{Proofs}
\label{sec:omitted_proofs}
\begin{sectionprop}\label{prop:upper_bound_proof}
The random variable $\upperBound_r$ is a superset of $\Vup$.
\end{sectionprop}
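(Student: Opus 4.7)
My plan is to show that an arbitrary $w \in \Vup$ lies in $\mathrm{Up}_r(\lowerBound_r) = \upperBound_r$, by considering whether or not $w$ is selected by Algorithm \ref{alg:subset_generation}.

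First, if $w \in \lowerBound_r$, then $w \in \overline{B(w, R-r)} \cap [w, \infty)$ since $d(w,w) = 0 \leq R-r$ and $w \geq w$; this set appears in the union defining $\mathrm{Up}_r(\lowerBound_r)$, so $w \in \upperBound_r$. Otherwise, I will locate some $u \in \lowerBound_r$ with $u \leq w$ and $d(u, w) \leq R - r$, which immediately places $w$ in $\overline{B(u, R-r)} \cap [u, \infty) \subseteq \upperBound_r$.

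To find such a $u$ when $w \notin \lowerBound_r$, note that $\min \Vup \in \lowerBound_r$ and $\min \Vup \leq w$, so $\{v \in \lowerBound_r : v \leq w\}$ is nonempty. Let $v$ be the largest element of this set, and let $\lowerBound_r^{(n)}$ be the partial subset produced by Algorithm \ref{alg:subset_generation} at the moment $v$ was added (so $\max \lowerBound_r^{(n)} = v$). I split into two subcases. (a) If $v = \max \lowerBound_r$, the while loop terminated after $v$ was added, so no up vertex $s > v$ with $d(s, \lowerBound_r^{(n)}) > R - r$ exists; since $w$ itself is an up vertex with $w > v$, this forces $d(w, \lowerBound_r^{(n)}) \leq R - r$. (b) If $v \neq \max \lowerBound_r$, the next vertex $v'$ added satisfies $v' > w$ by maximality of $v$; because the algorithm chose $v'$ as the smallest up vertex exceeding $v$ with $d(v', \lowerBound_r^{(n)}) > R - r$, and $w$ is an up vertex with $v < w < v'$, $w$ must fail the distance test: $d(w, \lowerBound_r^{(n)}) \leq R - r$. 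In either subcase we obtain some $u \in \lowerBound_r^{(n)} \subseteq \lowerBound_r$ with $u \leq v < w$ and $d(u, w) \leq R - r$, completing the argument.

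The main care point is subcase (b): I need to make sure the only reason $w$ could have been passed over in favor of $v'$ is the distance condition, i.e., that $w$ genuinely satisfies the other two predicates ($w \in \Vup$ and $w > \max \lowerBound_r^{(n)}$) at that iteration. Both are immediate from $w \in \Vup$ and $w > v = \max \lowerBound_r^{(n)}$, so the argument goes through without any further hidden calculation.
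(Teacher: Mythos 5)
Your proof is correct and follows essentially the same route as the paper's: you identify the stage $n$ of Algorithm \ref{alg:subset_generation} at which $w$ would have been eligible (equivalently, the largest element of $\lowerBound_r$ preceding $w$) and argue from the while-loop / $\mathrm{Next}$ condition that $w$ must have failed the distance test, so some $u \in \lowerBound[n]_r$ with $u \le w$ and $d(u,w) \le R-r$ witnesses $w \in \upperBound_r$. The only differences are presentational --- you argue directly where the paper argues by contradiction, and you are slightly more explicit in verifying the $\left[u,\infty\right)$ half of the membership condition in (\ref{eq:defn_up_function}).
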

\begin{proof}
Recall that $\lowerBound_r \subseteq \upperBound_r$. Assume that there is a vertex $x'$ of $\Vup$ which is not in $\upperBound_r$ and therefore not in $\lowerBound_r$. If $x' > \max \lowerBound_r$, then we must have $d\(x', \lowerBound_r\) > R - r$ to avoid having $x' \in \upperBound_r$. So $\mathrm{Next}\(\lowerBound_r, r\) = x'$. This implies that $x' \in \lowerBound_r$, which is a contradiction. 

Assume that $x' < \max \lowerBound_r$, and define the set
\begin{align*}
N &= \lcu n \in \mathbb Z^+ \lmid x' > \max \lowerBound[n]_r \rcu. 
\end{align*}
There is a maximum element $l$ of this set, implying that $x' > \max \lowerBound[l]_r$ and $x' < \max \lowerBound[l+1]_r$. As $x'$ is not in $\upperBound_r$ we must have $d\(x', \lowerBound_r\) > R - r$. This means that $\mathrm{Next}\(\lowerBound[l]_r, r\) = x'$, in which case we would have $x' \in \lowerBound_r$, which is a contradiction. 

Therefore every vertex of $\Vup$ must be contained in $\upperBound_r$. 
\end{proof}

\begin{sectionprop}\label{prop:min_determined_by_contained}
Take any $0 \leq r \leq R$ and $\lowerbound_r \in \possibleLower_r$. Then for any $\valueVertexSet$ so that $\lowerbound_r \subseteq \valueVertexSet \subseteq \mathrm{Up}_r\(\lowerbound_r\)$, it holds that $\min \lowerbound_r = \min \valueVertexSet$. 
\end{sectionprop}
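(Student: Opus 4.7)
The plan is to use a two-sided bound on $\min \valueVertexSet$, exploiting the two containments separately.

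First I would unpack the definition of $\mathrm{Up}_r(\lowerbound_r)$ from equation (\ref{eq:defn_up_function}). Every vertex $w \in \mathrm{Up}_r(\lowerbound_r)$ belongs to $\overline{B(\valueVertex, R-r)} \cap [\valueVertex, \infty)$ for some $\valueVertex \in \lowerbound_r$, which in particular forces $w \geq \valueVertex \geq \min \lowerbound_r$. Thus $\mathrm{Up}_r(\lowerbound_r) \subseteq [\min \lowerbound_r, \infty)$, so the containment $\valueVertexSet \subseteq \mathrm{Up}_r(\lowerbound_r)$ yields $\min \valueVertexSet \geq \min \lowerbound_r$.

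For the reverse inequality, the containment $\lowerbound_r \subseteq \valueVertexSet$ immediately gives $\min \valueVertexSet \leq \min \lowerbound_r$, since the minimum of a superset is no larger than the minimum of a subset.

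Combining the two inequalities yields $\min \valueVertexSet = \min \lowerbound_r$, as desired. There is no real obstacle here; the only thing worth noting is that the restriction to $[\valueVertex, \infty)$ built into the definition of $\mathrm{Up}_r$ is essential, as otherwise the closed ball could reach vertices strictly below $\min \lowerbound_r$.
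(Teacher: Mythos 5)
Your proof is correct and is essentially the paper's own argument: both establish that the $\left[\valueVertex,\infty\right)$ factor in the definition of $\mathrm{Up}_r$ forces $\min\mathrm{Up}_r\(\lowerbound_r\) = \min\lowerbound_r$ and then sandwich $\min\valueVertexSet$ between the two containments. Your version just spells out the lower bound a little more explicitly; no substantive difference.
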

\begin{proof}
Clearly $\min \lowerbound_r = \min \mathrm{Up}_r\(\lowerbound_r\)$ by the definition of $\mathrm{Up}_r$ in (\ref{eq:defn_up_function}). But $\lowerbound_r \subseteq \valueVertexSet \subseteq \mathrm{Up}_r\(\lowerbound_r\)$ implies that
\begin{align*}
\min \lowerbound_r \geq \min \valueVertexSet \geq \min \mathrm{Up}_r\(\lowerbound_r\). 
\end{align*}
So $\min \lowerbound_r = \min \valueVertexSet$. 
\end{proof}

\begin{sectionprop}Assume that $\valueVertexSet = \lcu x_1,\dots, x_n\rcu$ is an ordered sequence of vertices so that $d\(x_k, \lcu x_1, \dots, x_{k-1}\rcu\) > R - r$. Then $\possibleLower_r$ consists exactly of all such sequences $\valueVertexSet$.
\end{sectionprop}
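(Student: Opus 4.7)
The plan is to prove the equality of these two sets by establishing each inclusion separately, with both directions amounting to a careful unpacking of Algorithm~\ref{alg:subset_generation}.

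For the forward direction ($\possibleLower_r \subseteq $ such sequences), I would start with any $\lowerbound_r \in \possibleLower_r$. By definition there exists a realization $\Vup \subseteq \fixedVertexSet$ with $\generateSubset{\Vup}{R}{r} = \lowerbound_r$. I would list the elements $x_1, x_2, \dots, x_n$ of $\lowerbound_r$ in the order they are appended to $\lowerBound_r$ during the while loop on line 4 of Algorithm~\ref{alg:subset_generation}. The guard condition ``$s > \max \lowerBound_r$'' together with the minimum-selection on line 5 immediately gives $x_1 < x_2 < \cdots < x_n$, and the guard ``$d(s, \lowerBound_r) > R - r$'' gives exactly $d(x_k, \{x_1,\dots,x_{k-1}\}) > R - r$ for each $k \geq 2$ (with the $k=1$ condition being vacuous). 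Hence $\lowerbound_r$ is a sequence of the stated form.

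For the reverse direction, given any $\valueVertexSet = \{x_1, \dots, x_n\}$ with $x_1 < \cdots < x_n$ and $d(x_k, \{x_1,\dots,x_{k-1}\}) > R - r$ for all $k \geq 2$, I would take $\Vup = \valueVertexSet$ and run Algorithm~\ref{alg:subset_generation} on it, showing by induction on $k$ that after $k$ executions of the while-loop body $\lowerBound_r = \{x_1, \dots, x_k\}$. The base case $k=1$ holds because line 5 first selects $\min \Vup = x_1$. For the inductive step, the next added vertex is
\[
\min\bigl\{s \in \valueVertexSet \,:\, s > x_k,\ d\bigl(s, \{x_1,\dots,x_k\}\bigr) > R - r\bigr\}.
\]
The hypothesis on $\valueVertexSet$ guarantees $x_{k+1}$ belongs to this set, while the ordering $x_1 < \cdots < x_n$ implies no element of $\valueVertexSet$ lies strictly between $x_k$ and $x_{k+1}$; thus $x_{k+1}$ achieves the minimum. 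After $n$ iterations $\lowerBound_r = \valueVertexSet$, and the while loop terminates because no element of $\valueVertexSet$ exceeds $\max \valueVertexSet = x_n$. This realization has positive probability under the vertex-failure model, so $\valueVertexSet \in \possibleLower_r$.

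There is no genuinely hard step here: both directions follow by tracking the while loop of Algorithm~\ref{alg:subset_generation}. The only subtlety worth flagging explicitly is that in the reverse direction one must check that the greedy minimum-selection on line 5 actually produces $x_{k+1}$ and not some other element; this is precisely where the assumed ordering $x_1 < \cdots < x_n$ (together with $\Vup = \valueVertexSet$) is used.
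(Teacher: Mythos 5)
Your proof is correct and takes essentially the same route as the paper's: the paper's (very terse) argument likewise observes that $\generateSubsetSymbol$ applied to any sequence with the stated property returns that sequence (so take $\Vup = \valueVertexSet$), and that the algorithm only ever outputs sequences with the stated property. You have simply supplied the induction and the minimum-selection bookkeeping that the paper leaves implicit.
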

\begin{proof}
It is clear that $\generateSubset{\valueVertexSet, R, r} = \valueVertexSet$ for any such $\valueVertexSet$. But the definition of $\generateSubsetSymbol$ in Algorithm \ref{alg:subset_generation} will always generate vertices with the stated property. 
\end{proof}

\begin{sectioncorollary}\label{cor:removing_points}If $\valueVertexSet \in \possibleLower_r$, then the value obtained by removing any number of points is still in $\possibleLower_r$. 
\end{sectioncorollary}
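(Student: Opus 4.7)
The plan is to invoke the characterization of $\possibleLower_r$ given by the immediately preceding proposition: $\possibleLower_r$ consists exactly of those ordered sequences $\valueVertexSet = \lcu x_1, \dots, x_n \rcu$ satisfying $d(x_k, \lcu x_1, \dots, x_{k-1}\rcu) > R - r$ for every $k$. Once this characterization is in hand, the corollary reduces to a purely combinatorial observation about distances to subsets.

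Concretely, I would argue as follows. Suppose $\valueVertexSet = \lcu x_1, \dots, x_n\rcu \in \possibleLower_r$ and let $\valueVertexSet' = \lcu x_{i_1}, \dots, x_{i_m}\rcu$ with $i_1 < i_2 < \cdots < i_m$ be the sequence obtained by removing some points, keeping the induced order. To show $\valueVertexSet' \in \possibleLower_r$, I need to verify that for each $k$,
\begin{align*}
d\(x_{i_k}, \lcu x_{i_1}, \dots, x_{i_{k-1}}\rcu\) > R - r.
\end{align*}
Since $\lcu x_{i_1}, \dots, x_{i_{k-1}}\rcu \subseteq \lcu x_1, \dots, x_{i_k - 1}\rcu$, the distance from $x_{i_k}$ to the smaller set is at least the distance to the larger set (distance to a set is a minimum, and the minimum over a subset is at least the minimum over the superset). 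The hypothesis $\valueVertexSet \in \possibleLower_r$ gives $d(x_{i_k}, \lcu x_1, \dots, x_{i_k - 1}\rcu) > R - r$, and the inequality above transfers directly to $\valueVertexSet'$.

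By iterating this one-element-at-a-time removal (or simply applying it to an arbitrary subset of indices), the result follows for removal of any number of points. There is no real obstacle here; the work was done in the preceding proposition, which converts membership in $\possibleLower_r$ into a local distance condition that is manifestly monotone under passing to subsequences.
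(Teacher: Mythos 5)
Your argument is correct and is exactly the justification the paper intends: the corollary is stated without proof as an immediate consequence of the preceding characterization of $\possibleLower_r$, and your observation that the prefix-distance condition is monotone under passing to subsequences (the minimum over a smaller set can only be larger) is the right way to make that implication explicit. No gaps.
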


\begin{sectionprop}\label{prop:contained_between_determines_lower_bound}
Take any $0 \leq r \leq R$ and $\lowerbound_r \in \possibleLower_r$. Then for any $\valueVertexSet$ so that $\lowerbound_r \subseteq \valueVertexSet \subseteq \mathrm{Up}_r\(\lowerbound_r\)$, 
\begin{align*}
\generateSubset{\valueVertexSet}{R}{r} &= \lowerbound_r. 
\end{align*}
\end{sectionprop}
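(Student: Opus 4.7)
The plan is to run the \generateSubsetSymbol{} algorithm on input $\valueVertexSet$ and show, by induction on the iteration count, that at step $k$ the algorithm appends exactly the $k$-th element of $\lowerbound_r$ (where $\lowerbound_r = \{x_1, \dots, x_n\}$ with $x_1 < \dots < x_n$ and $d(x_k, \{x_1, \dots, x_{k-1}\}) > R - r$, using the earlier proposition characterising $\possibleLower_r$).

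For the base case, Proposition~\ref{prop:min_determined_by_contained} gives $\min \valueVertexSet = \min \lowerbound_r = x_1$, and since $x_1 \in \valueVertexSet$ satisfies the vacuous distance condition, the first element added by $\generateSubsetSymbol$ is $x_1$. For the inductive step, suppose the algorithm's current set is exactly $\{x_1, \dots, x_k\}$. The element $x_{k+1}$ lies in $\lowerbound_r \subseteq \valueVertexSet$, exceeds $x_k = \max\{x_1,\dots,x_k\}$, and satisfies $d(x_{k+1}, \{x_1, \dots, x_k\}) > R - r$ by the characterisation of $\possibleLower_r$; hence it is a legal candidate. The main obstacle is to rule out any $s \in \valueVertexSet$ with $x_k < s < x_{k+1}$ as a candidate: this requires exploiting the definition of $\mathrm{Up}_r$ in~(\ref{eq:defn_up_function}).

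Concretely, since $s \in \valueVertexSet \subseteq \mathrm{Up}_r(\lowerbound_r)$, there exists $v \in \lowerbound_r$ with $v \leq s$ and $d(s, v) \leq R - r$. Because $v \in \lowerbound_r = \{x_1, \dots, x_n\}$ and $v \leq s < x_{k+1}$, we must have $v \in \{x_1, \dots, x_k\}$, whence $d(s, \{x_1, \dots, x_k\}) \leq R - r$, so $s$ fails the selection condition. Thus the minimum legal candidate is $x_{k+1}$ itself, completing the induction. Finally, after $\{x_1, \dots, x_n\} = \lowerbound_r$ has been produced, the same argument applied to any putative further element $s \in \valueVertexSet$ with $s > x_n$ shows $d(s, \lowerbound_r) \leq R - r$, so the loop terminates and $\generateSubset{\valueVertexSet}{R}{r} = \lowerbound_r$ as required.
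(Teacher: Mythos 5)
Your proof is correct. It has the same skeleton as the paper's argument---an induction that matches the elements of $\generateSubset{\valueVertexSet}{R}{r}$ with those of $\lowerbound_r$ in increasing order, anchored by Proposition~\ref{prop:min_determined_by_contained} for the first element---but the inductive step is carried out quite differently, and in a way that is arguably tighter. The paper peels off the minimum element, invokes Corollary~\ref{cor:removing_points} to stay inside $\possibleLower_r$, and then re-applies Proposition~\ref{prop:min_determined_by_contained} to the truncated sets; that route needs the containment $\valueVertexSet \cap A \subseteq \mathrm{Up}_r(\lowerbound_r \cap A)$ with $A = (x^1,\infty)$, which is not automatic (a vertex of $\valueVertexSet$ larger than $x^1$ may be covered only by the ball around $x^1$) and is asserted rather than argued. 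You instead simulate Algorithm~\ref{alg:subset_generation} directly and eliminate every candidate $s$ strictly between $x_k$ and $x_{k+1}$ by unwinding the definition of $\mathrm{Up}_r$ in (\ref{eq:defn_up_function}): the witness $v \in \lowerbound_r$ with $v \le s$ and $d(s,v) \le R-r$ must already lie in $\{x_1,\dots,x_k\}$, so $s$ fails the distance test and cannot be selected. That is precisely where the content of the proposition lives, and your argument meets it head-on; the same observation also disposes of termination once all of $\lowerbound_r$ has been emitted. The only convention you rely on implicitly is that the first iteration of the algorithm selects $\min \valueVertexSet$ (the conditions $s > \max\emptyset$ and $d(s,\emptyset) > R-r$ being vacuously true), which is how the paper itself uses the algorithm, so nothing is missing.
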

\begin{proof}
Let $\mathbf y = \generateSubset{\valueVertexSet}{R}{r}$ and let $\valueVertex_r, \dots, \valueVertex^{n_x}$ be an enumeration of $\lowerbound_r$ according to the arbitrary ordering. Similarly let $y^1, \dots, y^{n_y}$ be an enumeration of the set $\mathbf y$. From Proposition \ref{prop:min_determined_by_contained}, we know that $y^1 = \valueVertex_r$. All the subsequent points in $\mathbf y$ are more than distance $R - r$ from $y^1 = \lowerbound[1]_r$. Let $A = \(y^1, \infty\) = \(x^1, \infty\)$. Then 
\begin{align*}
\mathbf y \cap A \subseteq \mathbf x \subseteq \mathrm{Up}_r\(\mathbf y \cap A\), 
\lowerBound_r \cap A \subseteq \mathbf x \subseteq \mathrm{Up}_r\(\lowerBound_r \cap A\).
\end{align*}
Taking the intersection with $A$ removes only the first point of both $\mathbf y$ and $\lowerBound_r$, so by Corollary \ref{cor:removing_points} these are still in $\possibleLower_r$. Applying Proposition \ref{prop:min_determined_by_contained} again shows that $y^2 = x^2$. Applying this argument inductively shows that $n_y = n_x$ and $\mathbf y = \lowerbound_r$. 
\end{proof}

\begin{sectionprop}\label{prop:articulation_vertices_of_second_last}
Assume that $\lowerboundExt_{R-1}$ is a value of $\lowerBoundExt_{R-1}$ which implies that $F_{R-1}$ occurs and let $\upperboundExt_{R-1}$ be the associated value for $\upperBoundExt_{R-1}$. If $\fixedVertex$ is a cut vertex of $\upperboundExt_{R-1}$ then
\begin{align*}
F \cap \lcu \lowerBoundExt_{R-1} = \lowerboundExt_{R-1} \rcu \subseteq \lcu \lowerBoundExt_{R-1} = \lowerboundExt_{R-1} \rcu \cap \lcu \fixedVertex \in \Vup\rcu. 
\end{align*}
\end{sectionprop}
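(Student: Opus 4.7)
The plan is to prove the contrapositive: assuming $F$ holds and $\lowerBoundExt_{R-1} = \lowerboundExt_{R-1}$, I will show that every cut vertex $\fixedVertex$ of $\upperboundExt_{R-1}$ satisfies $\fixedVertex \in \Vup$. If $\fixedVertex \in \lowerboundExt_{R-1}$ then $\fixedVertex \in \Vup$ is immediate from $\lowerBoundExt_{R-1} \subseteq \Vup$, so I may assume $\fixedVertex \notin \lowerboundExt_{R-1}$; suppose also for contradiction that $\fixedVertex \notin \Vup$, so that $\Vup \subseteq \upperboundExt_{R-1} \setminus \lcu \fixedVertex \rcu$.

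The central structural claim is that every connected component of the induced subgraph on $\upperboundExt_{R-1} \setminus \lcu \fixedVertex \rcu$ contains at least one vertex of $\lowerBound_{R-1}$. This is the step that uses the specific value $r = R - 1$. Because $R - (R-1) = 1$, formula (\ref{eq:defn_up_function}) expresses $\upperBound_{R-1} = \mathrm{Up}_{R-1}(\lowerBound_{R-1})$ as a union of closed balls of radius $1$ around the centers in $\lowerBound_{R-1}$. Pick any $\valueVertex \in \upperboundExt_{R-1} \setminus \lcu \fixedVertex \rcu$; since $\upperboundExt_{R-1} \subseteq \upperBound_{R-1}$, there exists $\customVertex{c} \in \lowerBound_{R-1}$ with $\customVertex{c} \leq \valueVertex$ and $d(\customVertex{c}, \valueVertex) \leq 1$. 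Now $\customVertex{c} \in \lowerBound_{R-1} \subseteq \lowerboundExt_{R-1} \subseteq \upperboundExt_{R-1}$ and $\customVertex{c} \neq \fixedVertex$ (as $\fixedVertex \notin \lowerboundExt_{R-1}$), so either $\valueVertex = \customVertex{c}$ or the single edge between $\customVertex{c}$ and $\valueVertex$ survives in the induced subgraph on $\upperboundExt_{R-1} \setminus \lcu \fixedVertex \rcu$. Either way $\valueVertex$ lies in the same component as the center $\customVertex{c} \in \lowerBound_{R-1}$.

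Since $\fixedVertex$ is a cut vertex of $\upperboundExt_{R-1}$, the component of $\upperboundExt_{R-1}$ containing $\fixedVertex$ splits into at least two components of $\upperboundExt_{R-1} \setminus \lcu \fixedVertex \rcu$. By the structural claim, each of those pieces contains a vertex of $\lowerBound_{R-1} \subseteq \lowerboundExt_{R-1} \subseteq \Vup$. Hence $\Vup$ has vertices in at least two distinct components of $\upperboundExt_{R-1} \setminus \lcu \fixedVertex \rcu$. But any path inside $\fixedGraph\<\Vup\>$ uses only vertices and edges of $\upperboundExt_{R-1} \setminus \lcu \fixedVertex \rcu$ (since $\Vup \subseteq \upperboundExt_{R-1}$ and $\fixedVertex \notin \Vup$), so no such path can connect vertices lying in different components. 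Therefore $\fixedGraph\<\Vup\>$ is disconnected, contradicting $F$.

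The only delicate point is the structural claim, and it works precisely because the radius $R - r$ equals $1$ at this last step: each non-center vertex of $\upperBound_{R-1}$ is one edge from a center, and that single edge certifies component membership without needing to traverse the suspected cut vertex $\fixedVertex$. At smaller $r$ the radius exceeds $1$, paths from a vertex back to its center could be forced through $\fixedVertex$, and the conclusion would fail in general, which is consistent with the proposition being stated only at step $R - 1$.
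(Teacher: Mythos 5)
Your proof is correct and follows essentially the same route as the paper's: both arguments hinge on the fact that at step $R-1$ the radius $R-r$ equals $1$, so every vertex of $\upperboundExt_{R-1}$ is equal or adjacent to a vertex of $\lowerBound_{R-1}$, whence each component produced by deleting the cut vertex contains a definitely-\up{} vertex and the cut vertex must itself be \up{} for $\Vup$ to be connected. Your explicit statement of the structural claim for every component of $\upperboundExt_{R-1} \setminus \lcu \fixedVertex \rcu$ is a slightly cleaner packaging of the same ball-of-radius-one argument the paper applies to the components $C_1$ and $C_2$.
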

\begin{proof}
If $\fixedVertex \in \lowerBoundExt_{R-1}$ the statement is trivial as $\lowerBoundExt_{R-1} \subseteq \Vup$. So assume that $\fixedVertex \not\in \lowerBoundExt_{R-1}$. 

As $\fixedVertex$ is a cut vertex, it separates $\upperboundExt_{R-1}$ into two non-empty components, denoted by $C_1$ and $C_2$. Assume that component $C_1$ does not contain any vertices of $\lowerboundExt_{R-1}$ and let $\customVertex{w}$ be a vertex in $C_1$. By the definition of $\upperboundExt_{R-1}$ we must have $\customVertex{w} \in \upperbound_{R-1}$. This implies that there is a vertex of $\lowerbound_{R-1}$ contained in $B\(\customVertex{w}, 1\) \subseteq C_1 \cup \fixedVertex$. As we assumed that $\fixedVertex \not\in \lowerbound_{R-1}$, we must have a vertex of $\lowerbound_{R-1}$ contained in $C_1$. 

This implies that $C_1$ contains a vertex of $\lowerboundExt_{R-1}$, and by an identical argument so does $C_2$. So there are vertices of $\Vup$ in both $C_1$ and $C_2$, and if $\fixedVertex \not\in\Vup$ these vertices will lie in different components of $\Vup$. 
\end{proof}

\begin{sectionprop}\label{prop:articulation_biconnected_decomposition}
Assume that $\lowerboundExt_{R-1}$ is a value of $\lowerBoundExt_{R-1}$ for which $F_{R-1}$ occurs and let $\upperboundExt_{R-1}$ be the associated value for $\upperBoundExt_{R-1}$. Let $V_{\mathrm{cut}}$ be the set of cut vertices of $\upperboundExt_{R-1}$, and let $\mathbf B = \lcu B_i \rcu_{i=1}^{\abs{\mathbf B}}$ be the set of all biconnected components of $\upperboundExt_{R-1}$. Define the events
\begin{align*}
A &= \lcu \lowerBoundExt_{R-1} = \lowerboundExt_{R-1} \rcu,&& B = \lcu V_{\mathrm{cut}} \subseteq \Vup \rcu, &&C = \bigcap_{i=1}^{\abs{\mathbf B}} \lcu \varphi\(\Vup \cap B_i\) = 1\rcu. 
\end{align*}
The notation $\Vup \cap B_i$ refers to the subgraph of $\Vup$ induced by the vertex set $B_i$. Then $F \cap A = A \cap B \cap C$. 
\end{sectionprop}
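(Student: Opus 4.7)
My plan is to establish $F \cap A = A \cap B \cap C$ by two inclusions, both leaning on the block--cut tree structure of $\upperboundExt_{R-1}$.

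For the inclusion $F \cap A \subseteq B \cap C$, the half $F \cap A \subseteq B$ is immediate from Proposition \ref{prop:articulation_vertices_of_second_last}, which already guarantees that every cut vertex of $\upperboundExt_{R-1}$ is in $\Vup$ on $F \cap A$. For $F \cap A \subseteq C$, I will fix a biconnected component $B_i$ and two vertices $u, v \in \Vup \cap B_i$. Connectedness of $\Vup$ yields a walk $P \subseteq \Vup$ from $u$ to $v$ inside $\upperboundExt_{R-1}$. I then argue that every maximal excursion of $P$ out of $B_i$ must leave and return through the \emph{same} cut vertex of $B_i$, for otherwise the block--cut tree would contain a cycle. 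Collapsing each such excursion to that cut vertex produces a walk from $u$ to $v$ that lies inside $B_i$, still uses only vertices of $\Vup$, and therefore certifies that $\Vup \cap B_i$ is connected.

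For the reverse inclusion $A \cap B \cap C \subseteq F$, I first observe that on $F_{R-1}$ the graph $\upperboundExt_{R-1}$ is connected: since $\upperboundExt_{R-1} \subseteq \upperBound_{R-1}$, every vertex of $\upperboundExt_{R-1}$ lies within distance one of some vertex of $\lowerBound_{R-1} \subseteq \lowerboundExt_{R-1}$ and is therefore adjacent in $\upperboundExt_{R-1}$ to a vertex of $\lowerboundExt_{R-1}$; the witness connected set provided by $F_{R-1}$ places all of $\lowerboundExt_{R-1}$ in a single connected component of $\upperboundExt_{R-1}$, so the whole graph lies in that component. Now, given $u, v \in \Vup$, take any walk from $u$ to $v$ in the connected graph $\upperboundExt_{R-1}$ and decompose it into maximal sub-walks, each inside a single biconnected component $B_i$. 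The endpoints of consecutive sub-walks are cut vertices of $\upperboundExt_{R-1}$, which are in $\Vup$ by $B$, while the initial and final endpoints $u$ and $v$ are in $\Vup$ by hypothesis. By $C$, each $\Vup \cap B_i$ is connected, so every sub-walk can be replaced by one inside $\Vup \cap B_i$ with the same endpoints. Concatenating the replacements produces a walk from $u$ to $v$ in $\Vup$, which establishes $F$.

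The main obstacle is the block--cut tree lemma used to collapse excursions, namely that any walk in a graph that exits a biconnected component through a cut vertex $c$ and re-enters through a cut vertex $c'$ must have $c = c'$. This is a standard consequence of the block--cut tree being a tree (two distinct such cut vertices, together with any internal path of the biconnected component between them, would produce a cycle in the block--cut tree), but it is the one structural fact that needs explicit verification; the remaining steps reduce to routine decomposition and concatenation of walks.
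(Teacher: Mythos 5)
Your proof is correct and follows the same two-inclusion skeleton as the paper's, but the two halves are carried out somewhat differently. For $F \cap A \subseteq C$ the paper avoids your block--cut-tree excursion lemma entirely: it takes a \emph{simple} path $P$ in $\Vup$ between two vertices of $\Vup \cap B_i$ and observes that if $P$ left $B_i$ then $P \cup B_i$ would be a biconnected subgraph of $\upperboundExt_{R-1}$ strictly containing $B_i$, contradicting maximality of the block. That one-line appeal to maximality replaces your excursion-collapsing machinery, whose key lemma (a maximal excursion must exit and re-enter through the same cut vertex) you rightly flag as the point needing verification; it does hold, by exactly the cycle-in-the-block-tree argument you sketch, so your route is valid but longer. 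For the converse inclusion the two arguments are essentially the same idea: the paper packages your walk decomposition as connectivity of an auxiliary ``block graph'' $\mathbf H$ whose vertices are the $B_i$ and whose edges record shared cut vertices, then concatenates paths inside consecutive sets $\Vup \cap B_i$ through the shared cut vertices, which lie in $\Vup$ by event $B$. One point in your favour: the paper simply asserts that $\upperboundExt_{R-1}$ is connected, whereas you supply a justification (every vertex of $\upperBound_{R-1}$ lies within distance one of $\lowerBound_{R-1}$, and $F_{R-1}$ provides a connected witness containing all of $\lowerboundExt_{R-1}$); this closes a small gap that the paper leaves implicit.
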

\begin{proof}
Let $\mathbf H$ be the graph with vertices $\mathbf B$, and edges between $B_i$ and $B_j$ if $B_i \cap B_j \neq \emptyset$. Note that as $\upperboundExt_{R-1}$ was connected, $\mathbf H$ must be too.

Assume now that events $A, B$ and $C$ all occur. Take any vertices $B_i$ and $B_j$ of $\mathbf H$ which are connected by an edge, and pick arbitrary vertices $\customVertex{v}_i \in B_i\cap \Vup$ and $\customVertex{v}_j \in B_j\cap \Vup$. Let $\customVertex{c}$ be some vertex contained in $B_i \cap B_j$. Note that the event $B$ implies that in fact $B_i \cap B_j \subseteq \Vup$, as all vertices in the intersection are cut vertices. 

As $B_i\cap \Vup$ is a connected graph there exists a path $P_i$ in $B_i\cap\Vup$ from $\customVertex{v}_i$ to $\customVertex{c}$ and another path $P_j$ in $B_j \cap \Vup$ from $\customVertex{c}$ to $\customVertex{v}_j$. This gives a path in $\Vup$ connecting $\customVertex{v}_i$ and $\customVertex{v}_j$. The connectivity of $\mathbf H$ means this argument can be extended to the case where $B_i$ and $B_j$ are not adjacent in $\mathbf H$. This proves that, given our assumptions, $\Vup$ is connected. We have therefore proved that $A \cap B \cap C \subseteq F \cap A$.

Now assume that $F$ and $A$ occur. Proposition \ref{prop:articulation_vertices_of_second_last} implies that $F \cap A \subseteq A \cap B$, so event $B$ must occur. Now pick any biconnected component $B_i$ and any two vertices $\customVertex{v}_1, \customVertex{v}_2 \in B_i \cap \Vup$. As $\Vup$ is connected there exists a simple path between $\customVertex{v}_1$ and $\customVertex{v}_2$. If there is such a simple path $P$ that leaves $B_i \cap \Vup$ then the subgraph $P \cup B_i$ of $\upperBoundExt_{R-1}$ is a biconnected component strictly larger than $B_i$. This would contradict $B_i$ being maximal, so $P$ must stay within $B_i\cap \Vup$, making $B_i \cap \Vup$ connected. Therefore event $C$ must also occur.

This proves that $F \cap A = A \cap B \cap C$. 
\end{proof}


\bibliographystyle{spbasic}      
\bibliography{../../thesis,../../randomGraphs,../../networkReliability,../../networkReliabilityApplications,../../residualNetworkReliability,../../sequentialMonteCarlo,../../maxFlow,../../computerScience,../../TransferMatrix}   

\end{document}